\newtheorem{proposition}{Proposition}
\newcommand{\myexpect}[2]{\mathbb{E}_{#1}\left[ #2 \right]}
\newcommand{\myprobability}[1]{\mathrm{Pr}\left\lbrace #1 \right\rbrace}
\newcommand{\myindex}[1]{\mathbf{1}\left\lbrace #1 \right\rbrace}
\newcommand{\myP}{\mathcal{P}}
\newcommand{\Pbpsk}{P_{\text{BPSK}}}
\newcommand{\Pout}{P_{\text{out}}}
\newcommand{\Poutasyn}{P^\text{asyn}_{\text{out}}}
\newcommand{\Echip}{\mathcal{E}_{\text{chip}}}
\newcommand{\Pdis}[1]{p^{\mathrm{d}}_{#1}}
\newcommand{\Pcon}[1]{p^{\mathrm{c}}_{#1}}
\newcommand{\Prx}[2]{\mathcal{P}^{(#1)}_{\mathrm{rx}}{#2}}
\newcommand{\Prxnew}[2]{\breve{\mathcal{P}}^{(#1)}_{\mathrm{rx}}{#2}}
\newcommand{\Erx}[2]{\mathcal{E}^{(#1)}_{\mathrm{rx}}{#2}}
\newcommand{\Erxnew}[2]{\breve{\mathcal{E}}^{(#1)}_{\mathrm{rx}}{#2}}
\newcommand{\EEH}[2]{\mathcal{E}^{(#1)}_{\mathrm{eh}}{#2}}
\newcommand{\Prxa}{\mathcal{P}_{0}}
\newcommand{\Prxb}{\mathcal{P}_{1}}
\newcommand{\Prxc}{\mathcal{P}_{2}}
\newcommand{\Prxd}{\mathcal{P}_{3}}
\newcommand{\Ptag}{P_{\text{tag}}}
\newcommand{\Ptagasyn}{P^{\text{asyn}}_{\text{tag}}}
\newcommand{\Ptagi}[1]{P^{({#1})}_{\text{tag}}}
\newcommand{\etabs}{\eta (1-\rho)}
\newcommand{\Peh}{\mathcal{P}_{\text{eh}}}
\newcommand{\PehK}[1]{\mathcal{P}_{\text{eh},{#1}}}
\newcommand{\Etag}{\mathcal{E}_{\text{tag}}}
\newcommand{\Etagasyn}{\mathcal{E}^{\text{asyn}}_{\text{tag}}}
\newcommand{\Preader}{P_{\text{reader}}}
\newcommand{\Preaderasyn}{P^{\text{asyn}}_{\text{reader}}}
\begin{document}

\title{Full-Duplex Backscatter Interference Networks Based on Time-Hopping Spread Spectrum}
\author{Wanchun Liu, Kaibin Huang, Xiangyun Zhou and Salman Durrani     \thanks{\setlength{\baselineskip}{13pt} \noindent Wanchun Liu, Xiangyun Zhou and Salman Durrani are with Research School of Engineering, The Australian National University, Canberra, ACT 2601, Australia 
	(emails: \{wanchun.liu, xiangyun.zhou, salman.durrani\}@anu.edu.au). 
Kaibin Huang is with the Dept. of Electrical and Electronic Engineering, The University of Hong Kong, Hong
Kong (email: huangkb@eee.hku.hk). }}
\maketitle

\vspace{-1.7cm}
\begin{abstract}
	\vspace{-0.3cm}
Future Internet-of-Things (IoT) is expected to wirelessly connect billions of low-complexity devices. 
For wireless \emph{information transfer} (IT) in IoT, high density of IoT devices and their ad hoc communication result in strong interference which acts as a bottleneck on wireless IT.
Furthermore, battery replacement for the massive number of IoT devices is difficult if not infeasible, making wireless \emph{energy transfer} (ET) desirable. 
This motivates:
(i) the design of \emph{full-duplex} wireless IT to reduce latency and enable efficient spectrum utilization, and
(ii) the implementation of \emph{passive} IoT devices using backscatter antennas that enable wireless ET from one device (reader) to another (tag).
However, the resultant increase in the density of simultaneous links exacerbates  the interference issue. 
This issue is addressed in this 
paper by proposing the design of full-duplex \emph{backscatter communication} (BackCom) networks, where a novel multiple-access scheme based on \emph{time-hopping spread-spectrum} (TH-SS) is designed to enable both one-way wireless ET and two-way wireless IT in coexisting backscatter reader-tag links. 
Comprehensive performance analysis of BackCom networks is presented in this paper, including forward/backward bit-error rates and wireless ET efficiency and outage probabilities, which accounts for energy harvesting at tags,  non-coherent and coherent detection at tags and readers, respectively, and the effects of asynchronous transmissions. 
\end{abstract}
	\vspace{-0.4cm}

\section{Introduction}
The vision of the future Internet-of-Things (IoT) is to connect tens of  billions of  low-complexity wireless devices (e.g., sensors and wearable computing devices), which are coordinated to enable new applications such as smart cities, home automation and e-healthcare. Designing the IoT physical layer faces several  challenges. For instance, battery replacement or recharging for the massive number of  IoT devices is difficult or even infeasible as many may be deployed in hazardous  environments or hidden in e.g., walls and furniture. Furthermore, ad hoc communications between dense IoT devices causes severe  interference that is a bottleneck on the network throughput. To tackle these challenges, the design of full-duplex \emph{backscatter communication} (BackCom) networks for supporting simultaneous one-way energy transfer (ET) and two-way information transfer (IT) in coexisting IoT links is proposed in this paper. The interference in such networks is suppressed by the proposed multiple-access scheme based on the \emph{time-hopping spread-spectrum} (TH-SS) that is also designed to facilitate wireless ET. Furthermore, the full-duplex IT is enabled by the superposition of coherent and non-coherent modulation given TH-SS and backscatter.

\subsection{Related Work}
BackCom has been mostly implemented for radio-frequency identification (RFID) applications where devices connected to the grid, called \emph{readers}, wirelessly power passive devices, called \emph{tags}, to feed back ID data \cite{OverviewRFID}. A tag harvests energy from an unmodulated carrier wave transmitted by the reader,  and modulates and reflects a fraction of the wave by adapting the level of  antenna impedance mismatch \cite{Boyer14}. This operation does not require the tag to have any active RF component (such as analog-to-digital converter, power amplifier and local oscillator) or an internal power source. This results  in  passive and extremely low-complexity hardware with small form factors, making BackCom a promising solution for implementing low-cost and ultra-dense IoT networks. 

\textit{BackCom Systems:} For this reason, active research has been conducted on designing techniques for various types of BackCom systems and networks which are more complex than the traditional~RFID systems\cite{Bletsas09,Bletsas14,Almaaitah14,Katabi12,Kaifeng16}. One focus of the research is to design multiple-access BackCom networks where~a single reader serves multiple tags. As proposed in \cite{Bletsas09}, collision can be avoided by directional beamforming at the reader and decoupling tags covered by the same beam using the frequency-shift keying modulation. Subsequently, alternative  multiple-access schemes  were  proposed in  \cite{Bletsas14} and \cite{Almaaitah14} based on  time-division multiple access and collision-detection-carrier-sensing based random access, respectively. A novel approach for collision avoidance was presented in  \cite{Katabi12} which treats backscatter transmissions by tags as a sparse code and decodes multi-tag data   using a compressive-sensing algorithm. 

\textit{ET in BackCom Systems:} IoT devices  having the capabilities of  sensing and computing consume more power than  simple RFID tags and also require much longer IT/ET ranges (RFID ranges are limited to only several meters). This calls for techniques for enhancing the ET efficiency in BackCom systems by leveraging the rich results from the popular area of  wireless power transfer (e.g., see the surveys in \cite{Huang15,Bi15}). In~\cite{Yang15}, it was proposed that  a reader is provisioned with multi-antennas to beam energy to multiple tags. An algorithm was also provided therein for the reader to estimate the forward-link channel, which is required for energy beamforming, using the backscattered pilot signal also transmitted by the reader. 

The wireless ET efficiency can be also enhanced by reader cooperation. For example, multiple readers are coordinated to perform ET (and IT) to multiple tags as proposed in \cite{Lee15}. The implementation of such designs require BackCom network architectures with centralized control. However, IoT relies primarily on distributed \emph{device-to-device} (D2D) communication. Large-scale distributed D2D BackCom are  modeled and analyzed in \cite{Kaifeng16} using stochastic geometry, where tags are wirelessly powered by dedicated stations (called power beacons). In particular, the network transmission capacity that measures the network spatial throughput was derived and maximized as a function of backscatter parameters  including duty cycle and reflection coefficient. Instead of relying on peer-to-peer ET, an alternative approach of powering IoT devices is to harvest ambient RF energy from transmissions by WiFi access points or TV towers~\cite{Ambient13}.

\textit{BackCom Systems with D2D Communication:} Conventional BackCom techniques designed for RFID applications mostly target  simple single-tag  systems  and one-way IT from a tag to a reader. 
{Nevertheless, 
for future IoT supporting D2D communications, many distributed reader-tag links will coexist. For example, near-by customers in a shopping mall/supermarket may use their smart devices (i.e., readers) to collect sales-promotion information from different goods on the shelf (which are sent by tags) at the same time.}
This prompts researcher to design  more sophisticated and versatile  BackCom techniques to improve the data rates and mitigate interference. An energy harvesting D2D BackCom system was designed in \cite{FullDuplex14} that features a full-duplex BackCom link where high-rate data and a low-rate control signal are transmitted in the  opposite directions using on-off keying and binary \emph{amplitude modulation} (AM), respectively, which are superimposed exploiting their asymmetric bit-rates. 
In BackCom systems with coexisting links, interference is a much more severe issue than that in conventional systems due to \emph{interference regeneration} by reflection at all nodes having backscatter antennas. 
{The TH-SS scheme was first proposed in an \emph{ultra-wide band} (UWB) system~\cite{MoeWin00},
and the idea of mitigating interference using TH-SS was further applied to UWB RFID (BackCom) systems~\cite{MoeWin10}.}
The drawback of such a system is that the required accurate analog detection of ultra-sharp UWB pulses places  a stringent requirement for hardware implementation and may not be suitable for low complexity IoT devices.

{Many application scenarios in the future IoT require low-latency transmissions, such as e-healthcare and public safety~\cite{IoTSurvey}. 
	Therefore, adopting full-duplex D2D communications between IoT devices such that each IoT node can speak and listen at the same time is desirable, since the latency of information transmission can be reduced significantly.
	The conventional approach for enabling full-duplex transmission over a single link relies on self-interference cancellation~\cite{FulldupexSurvey14}. The implementation requires sophisticated adaptive analog-and-digital signal processing that is unsuitable for low-complexity and low-power IoT devices. 
	Since in conventional BackCom systems, the reader/tag is able to transfer/receive energy and receive/transfer information simultaneously, it is natural to design a full-duplex BackCom system enabling simultaneous two-way information transmission/reception.
	A simple full-duplex BackCom design, supporting low data rates for IoT links, was proposed in\cite{FullDuplex14}.
	However, the drawback of the design is the requirement of asymmetric rates for transmissions in the opposite directions since it targets mixed transmissions of data and control signals.
	Though information flow in RFID applications is usually uni-directional, message exchange between nodes is common in IoT.
	Therefore, the reader-to-tag and tag-to-reader ITs are
	equally important for future IoT applications, which require symmetric communication links between the massive number of devices.}

\subsection{Contributions}
We consider a BackCom interference network comprising $K$ coexisting pairs reader-tag. Each reader is provisioned  with reliable power supply and performs both ET and IT to an intended tag that transmits data back to the reader by backscatter. Targeting this network, a novel multiple-access scheme, called \emph{time-hopping full-duplex BackCom},  is proposed to simultaneously  mitigate interference and enable full-duplex communication.  These two features are realized by two components of the scheme. 

First,  the  interference-mitigation feature of the  scheme relies on an extension of TH-SS to also support ET from readers to tags. 
{In general, there are two schemes to handle the interference: one is interference cancellation, and the other is interference suppression by spread spectrum.
The first scheme is not suitable for IoT since the reader and the tag are meant to be simple devices and are not able to apply complex analog circuits and digital algorithms to cancel the interference. Hence, it is better to adopt the second scheme and make the interference unlikely to happen.}
To this end, we propose the novel \emph{sequence-switch modulation}~scheme where a bit   is transmitted from a reader to a tag by switching between a pair of TH-SS sequences each containing a single random nonzero \emph{on-chip}. Besides reducing the interference power by the TH-SS sequence~\cite{Verdu99}, the design not only supports ET for every symbol via the transmission of a nonzero chip but also satisfies the constraint of non-coherent detection at tags using energy detectors~\cite{EnergyDetection67}. Each tag also continuously harvests energy from interference. 
{Although the TH-SS scheme has been widely adopted in UWB systems, the proposed sequence-switch modulation scheme has three novelties.
First, the sequence-switch modulation scheme facilitates the energy transfer, which is not considered in the UWB systems. Second, the joint rate and energy analysis is not considered in the studies of UWB system. Third, integrating the backscatter characteristics in the design is not considered in the studies of UWB system.}

Next, to realize the full-duplex feature of time-hopping BackCom, the backward transmission from a tag to a reader is implemented such that each tag modulates the transmitted on-chip in the corresponding TH-SS sequence using  the binary-phase-shift keying (BPSK)  and a reader performs coherent demodulation to detect the bit thus transmitted. The BPSK modulation at a tag is operated  by switching two impedances chosen according to the reflection coefficients having zero and $180$-degree phase shifts. Compared with the previous design of full-duplex BackCom in \cite{FullDuplex14}, the proposed technique has the advantages of supporting symmetric full-duplex data rates and interference mitigation. 

The performance of the proposed time-hopping full-duplex BackCom scheme is thoroughly analyzed in this paper in terms of \emph{bit-error rate} (BER) for IT and the expected \emph{energy-transfer rate} (ETR) and the \emph{energy-outage probability} for ET. The main results are summarized as follows:
\begin{enumerate}[1)]
\item (Synchronous Transmissions) First, consider a typical link in  a  two-link BackCom interference system where the time-hopping full-duplex BackCom scheme is deployed. Assume chip  synchronization between links.  From the IT perspective, the BERs for the forward (reader-to-tag) and the backward (tag-to-reader) transmissions are derived for both the cases of static and fading channels. The results quantify  the effects of TH-SS on mitigating the original and regenerated interference. Specifically, the BERs for forward transmission with non-coherent detection are shown to diminish  inversely with the \emph{sequence length} $N$  in the interference-limited regime. Moreover, the BERs for backward transmission with coherent detection converge to those   of classic BPSK, denoted as $P_{\text{BPSK}}$, approximately as $(P_{\text{BPSK}} + 1/N)$. From the ET perspective, the expected ETR and the energy-outage probability are derived. 
In particular, as $N$ increases, the expected ETR is observed to diminish if the power of nonzero chips are fixed, i.e., the power constrained case, or converge to the derived constants if the energy of each nonzero chip if fixed, i.e., the energy constrained case. 

\item (Asynchronous Transmissions) Next, the assumption of chip synchronization is relaxed. The preceding results are extended to the case of (chip) asynchronous transmissions. It is found that the lack of synchronization between coexisting links degrades the BER performance for both forward and backward transmissions. For example, in the high SNR regime, the BER for backward transmission is approximately doubled. Nevertheless, the effects of asynchronous transmissions on ET are  negligible.

\item ($K$-Link Systems) Last, the performance analysis for the two-link systems is generalized to a $K$-link system. It is shown that the BER for forward transmission as well as the expected ETRs are approximately proportional to $(K-1)$. 

\end{enumerate}
%
%

\textbf{Notation}: 
$\vert x \vert$ and $x^*$ denote the modulus and the conjugate of a complex number $x$, respectively.
$\mathfrak{R}\{\cdot\}$ denotes the real part of a complex number.
$\myexpect{}{X}$ denotes the expectation of a random variable $X$.
$\myprobability{\mathcal{A}}$ denotes the probability of the event $\mathcal{A}$. 
$\myindex{\mathcal{A}}$ denotes the indicator function, i.e., $\myindex{\mathcal{A}}$ is equal to one if $\mathcal{A}$ is true or otherwise, is equal to zero.

\section{System Model}\label{Sec_sys}
We consider a BackCom  system  consisting of $K$ coexisting single-antenna reader-tag pairs. Each reader is provisioned with a full-duplex antenna (see e.g.,\cite{Circulator13}) allowing simultaneous transmission and reception.  For simplicity, it is assumed that  self-interference (from transmission to reception) at the  reader  due to the use of a full-duplex antenna is perfectly cancelled,
{since the reader only transmits an unmodulated signal (i.e., the carrier wave), and the self-interference which can be easily cancelled by filtering in the analog domain.}
Each passive tag uses a backscatter antenna for transmission by backscattering a fraction of the incident signal and an energy harvester for harvesting the energy in the remaining fraction.  Each pair of intended reader and tag communicate by full-duplex transmission with robustness against interference using the design presented in the next section. The architecture of such a full-duplex passive tag is shown in Fig.~\ref{fig:circuit}.  The baseband additive white Gaussian noise (AWGN) at Reader~$k$ is represented by the random variable $z_{\text{reader},k}$ with variance $\sigma^2_{\text{reader}}$. The passband noise signal at Tag~$k$ is $z_{\text{tag},k}(t)$ with variance $\sigma^2_{\text{tag}}$.

It is assumed that all the readers/tags share the same band for communication. Block fading is assumed such that   the channel coefficients remain unchanged within a symbol duration  but may vary from symbol-to-symbol.
We consider both the static and Rayleigh fading channels, corresponding to the cases with or without mobility, respectively. 
{For the Rayleigh fading channel, we assume that the channel coefficient are composed of the large-scale path loss with exponent $\lambda$ and the statistically independent small scale Rayleigh fading.}	The distance between Reader~$m$ and Tag~$n$ is denoted by $d_{mn}$.
The \emph{channel state information} (CSI) of the intended backscatter channel (reader-to-tag-to-reader) is available at the corresponding reader. However, the CSI of interference channels is not available at the reader.
Moreover, tags have no knowledge of any channel.

It is important to note that the interference in a BackCom interference channel is more severe and complex than that in a conventional one. This is mainly due to \emph{interference regeneration} by backscatter antennas at tags that reflect all incident signals including both data and interference signals. As an example, a two-link system is shown in Fig.~\ref{fig:signals} where interference regeneration is illustrated.

The performance metrics are defined as follows. Both  the BER for the backward and forward IT are analyzed in the sequel. For the forward ET, we consider two metrics: (i) the \emph{expected} ETR, denoted as  $\Etag$ and  defined as the expected harvested energy at a tag per symbol,  and (ii)~the energy-outage probability, denoted as  $\Pout$ and defined as the probability that the harvested energy at the tag during a symbol duration is below the tag's fixed energy consumption, denoted as  $\mathcal{E}_0$. 

Note that $\Etag$ and $\Pout$ are related to  the cases of large or small energy storage at tags, respectively. 
{Specifically, a large energy storage battery accumulates the energy with random arrivals, and hence is able to constantly power the tag circuit especially when the instantaneous harvested energy is very small. 
Thus, we care about the expected harvested energy at a tag.
Given a small or no storage, the instantaneous harvested energy is required to exceed the circuit power so as to operate the tag circuit, and cannot be accumulated for further usage.
Thus, we care about the energy-outage probability.}
\begin{figure*}[t]
	\renewcommand{\captionfont}{\small} \renewcommand{\captionlabelfont}{\small}
			\hspace{-0.1cm}
	\minipage{0.55\textwidth}
	\centering
	\vspace{-1.5cm}
	\includegraphics[scale=0.8]{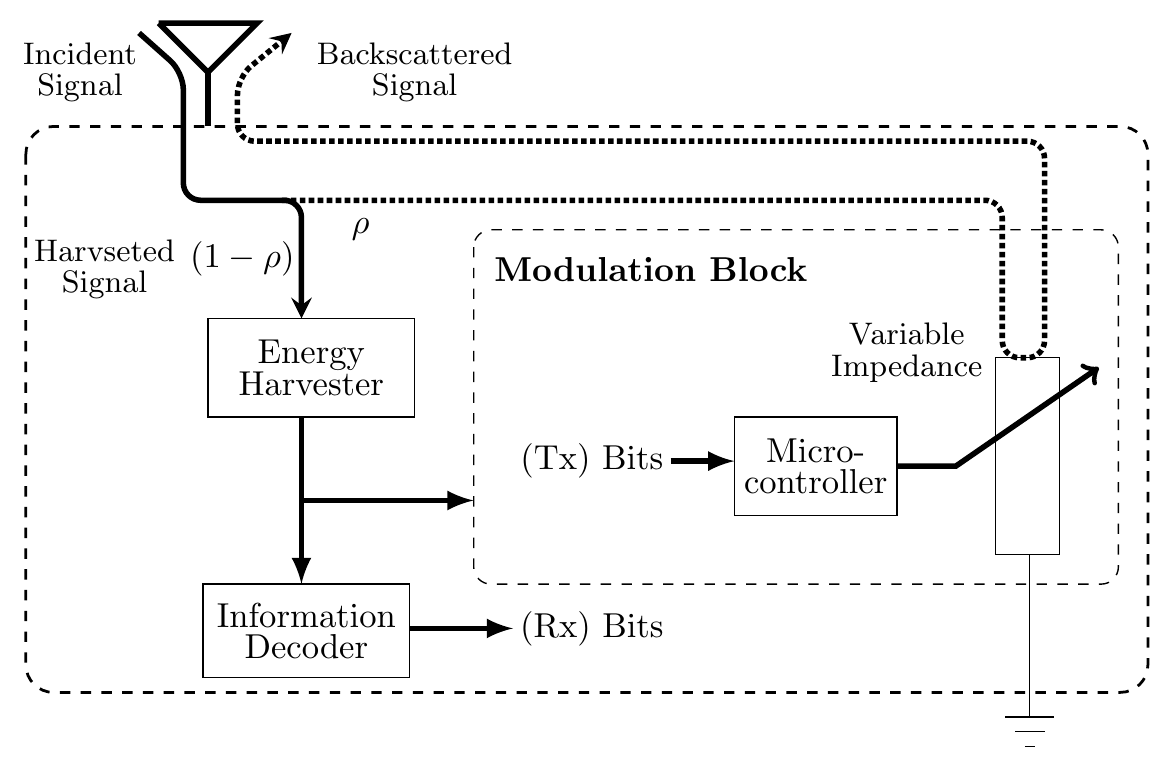}	
	\vspace*{-1.4cm}
	\caption{The architecture of a full-duplex tag.}
	\label{fig:circuit}
	\vspace*{-0.5cm}
	\endminipage
	\hspace{0.5cm}
	\vspace{-0.0cm}
	\minipage{0.43\textwidth}
	\centering
	\vspace{-1.0cm}	
	\includegraphics[scale=0.7]{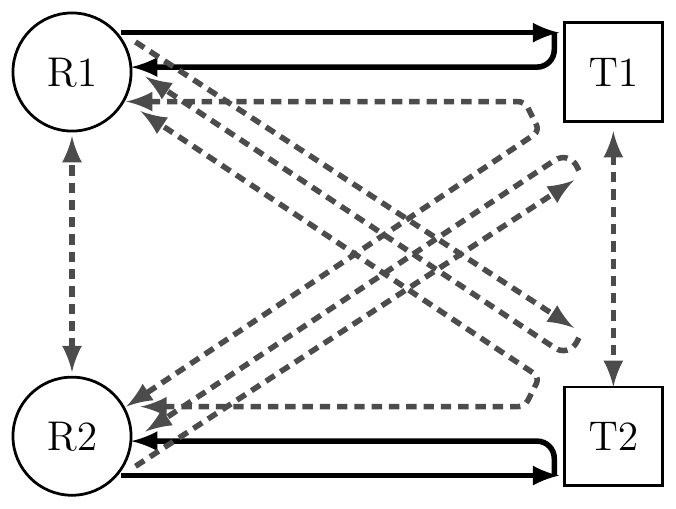}	
	\vspace*{-0.0cm}
	\caption{Two-link full-duplex interference channel. For example, signal R1-T1-R1 is the signal transmitted by Reader~1, then backscattered by Tag~1 and received by Reader~1.}\label{fig:signals}
	\endminipage
	\vspace*{-0.0cm}
\end{figure*}

\section{Time-Hopping Full-Duplex BackCom Scheme} \label{Sec_Design}
The proposed time-hopping full-duplex BackCom scheme comprises two components, namely the sequence-switch  modulation and full-duplex BackCom, which are designed in the following sub-sections. 

\subsection{Sequence-Switch Modulation}\label{Sec_Design_modulation}
The sequence-switch modulation used by each reader is designed for several purposes. The first is to suppress interference by TH-SS.   The second   is to enable simultaneous IT and ET under the constraint of non-coherent detection  at the intended tag by energy detection~\cite{EnergyDetection67}. Last, the modulation should support full-duplex BackCom by allowing a reader to transmit a carrier wave for backscatter by the intended tag.

Let a symbol duration $T$ be uniformly divided into $N$ slots called \emph{chips}.  For the mentioned  purposes, define a \emph{TH-SS sequence} as a $N$-chip random sequence comprising only a \emph{single} randomly located nonzero chip while others are  silent. Each link is assigned a pair of sequences with different nonzero chips to represent ``0" and ``1" of a bit. Then switching  the sequences enables the transmission of a binary-bit stream, giving the name of sequence-switch modulation.
{Note that the sequence-switch modulation is also named as the \emph{pulse-position modulation} in conventional UWB systems~\cite{MoeWin00}.}
Consider the generation of a pair of TH-SS sequences. The first sequence can be generated by randomly placing a (nonzero) \emph{on}-chip in one of the  $N$ chip positions and the second sequence by putting the corresponding  nonzero chip randomly in one of those chip-positions corresponding to zeros of the first sequence. A pair of TH-SS sequences for a particular link, say the $k$-th link, can be represented by the indices (or positions) of the corresponding pair of on-chips, denoted as $\mathcal{S}_k \triangleq \{s_{k0},s_{k1}\}$ and called a \emph{TH-SS pattern} [see Fig.~\ref{fig:chips}(a)], while all the other chips are the \emph{off-chips}. Note that there exist  $\frac{N(N-1)}{2}$ available patterns in total. The generation of the TH-SS patterns  for different links are assumed independent. The transmission of a single bit by Reader $k$ can be equivalently represented by a binary random variable $C_k$ with support $\mathcal{S}_k$, called a transmitted on-chip. Assuming chip synchronization between links, their transmissions in an arbitrary symbol duration can be represented by a set of i.i.d. random variables $\{C_k\}$ and illustrated in Fig.~\ref{fig:chips}(b). 

How the above design of sequence-switch modulation serves the mentioned purposes is discussed as follows. First,  interference between two links arises when their TH-SS patterns overlap and thereby causes detection errors at their intended tags. The likelihood of pattern overlapping reduces with the increasing {sequence length} $N$ (the processing gain) as the patterns become increasingly sparse {and different links are more likely to choose different patterns.} Consider two coexisting links with TH-SS patterns $\mathcal{S}_1$ and $\mathcal{S}_2$.  Given the design of sequence-switch modulation, there exist three scenarios for the relation between the two patterns, namely \emph{non-overlapping} ($\vert \mathcal{S}_1 \cap \mathcal{S}_2 \vert = 0$),
\emph{single-chip overlapping} ($\vert \mathcal{S}_1 \cap \mathcal{S}_2 \vert = 1$), and 
\emph{dual-chip overlapping} ($\vert \mathcal{S}_1 \cap \mathcal{S}_2 \vert = 2$). For each scenario, the actual transmitted on-chips may or may not \emph{collide} with each other. Thus each scenario can be further divided into multiple cases as illustrated in Fig.~\ref{fig:all_chips}.

 Next, the modulation design facilitates non-coherent detection at tags using energy detectors. 
 {For a particular link, since the assigned TH-SS pattern is known to both the reader and tag,  the tag detects the transmitted bit by estimating which of the two on-chips (i.e., the chips $s_{k0}$ and $s_{k1}$) in the pattern is transmitted using an energy detector. 
 Specifically, if the harvested energy in the chip $s_{k0}$ is larger than that in the chip $s_{k1}$, the estimated bit is `0', otherwise, it is `1'.}
 Furthermore,  the design of sequence-switch modulation enables ET simultaneous with IT by having an on-chip in every symbol duration for delivering energy to the intended tag. In addition, the tag also harvests energy from on-chips from the interference channels. As a result, the design achieves an ET efficiency at least twice of that by using the on-off keying, namely switching between a TH-SS sequence and a all-zero sequence. 
 {Note that unlike the conventional active full-duplex transceiver, the full-duplex tag's transmission and reception are passive backscattering based and energy detection based, respectively.
 	Thus, for such a passive transceiver, it is reasonable to assume that the backscattered (i.e., reflected) signal has no interference on the received signal for energy detection.}
 
 Last, this  design feature of having an on-chip in every data symbol facilitate full-duplex BackCom. Specifically, the carrier wave modulating each on-chip is modulated and backscattered by the intended tag for backward IT. The details are provided in the sequel.

\begin{figure*}[t]
	\renewcommand{\captionfont}{\small} \renewcommand{\captionlabelfont}{\small}
	\minipage{0.4\textwidth}
	\centering
	\vspace{-0.5cm}
	\includegraphics[scale=0.5]{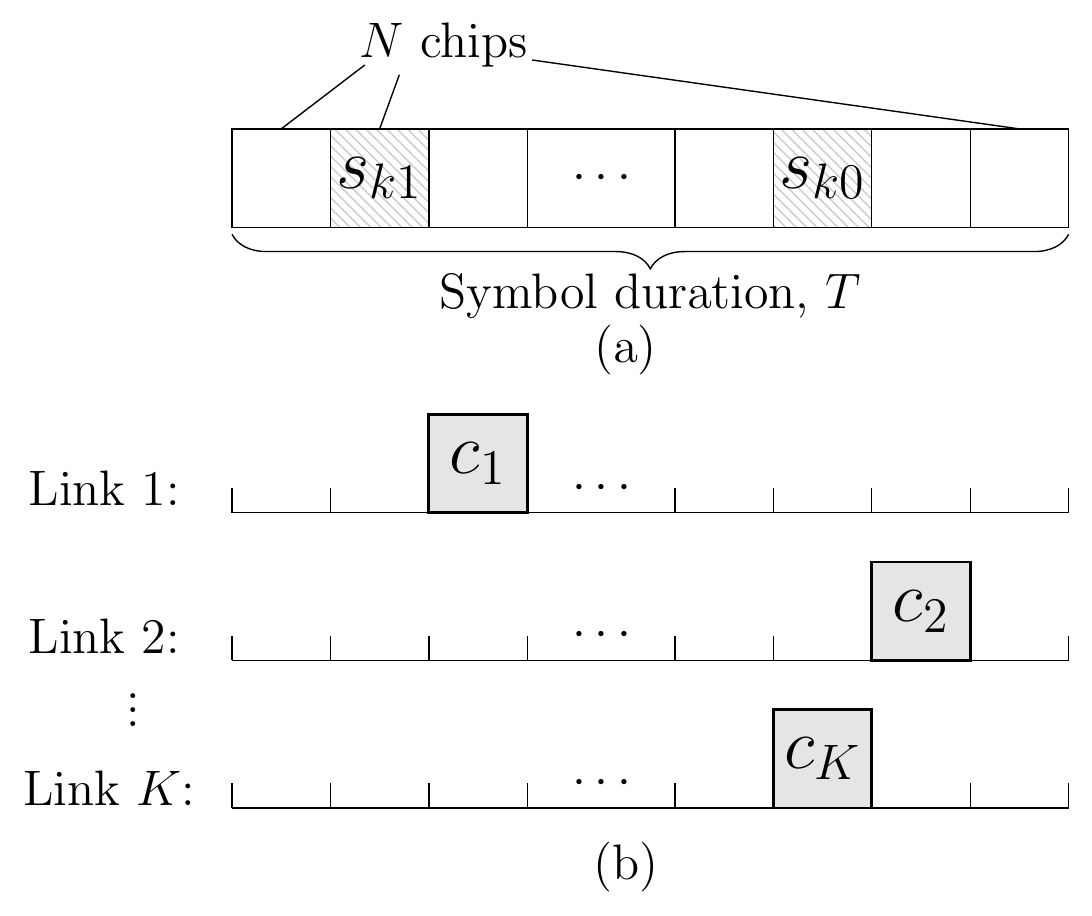}	
	\vspace*{-0.5cm}
	\caption{Sequence-switch modulation. (a) A TH-SS pattern. (b) Chip-synchronous transmissions of different links. }\label{fig:chips}
	\endminipage
	\hspace{0.6cm}
	\minipage{0.5\textwidth}
	\centering
	\vspace{-1.2cm}
	\includegraphics[scale=0.7]{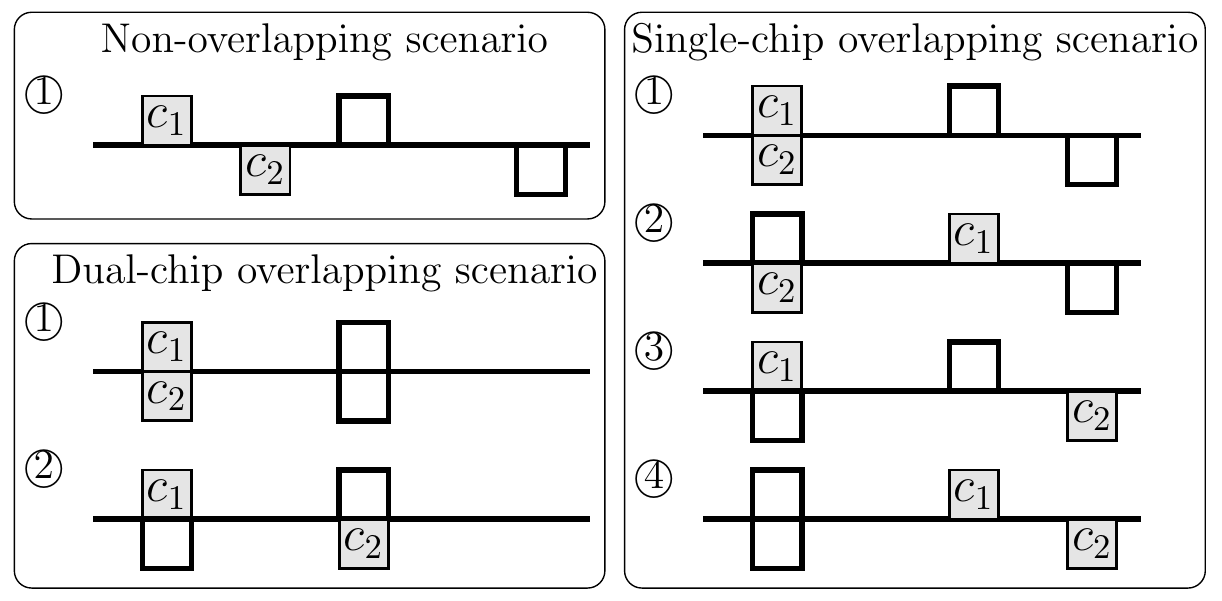}	
	\vspace*{-0.6cm}
	\caption{\small Two-link pattern-overlapping scenarios and transmission cases. The pairs of on-chips of Links~1 and~2 are the squares above and under the lines, respectively. }
	\label{fig:all_chips}
	\vspace*{-0.4cm}
	\endminipage
	\vspace*{-0.4cm}
\end{figure*}

\subsection{Full-Duplex BackCom} \label{Sec_Design_full_duplex}
Building on the sequence-switch modulation in the preceding subsection, the full-duplex BackCom is realized by the joint operation of the intended reader and tag designed as follows. 

Consider the reader side of a particular link. A reader using one full-duplex  antenna  to transmit a forward bit stream using the sequence-switch modulation by sending a carrier wave represented by $\sqrt{2 \myP} \mathfrak{R}\{ e^{j\omega t}\}$ during the on-chip, where  $\myP$ is the transmission  power and $\omega$ is the  angular frequency. At the same time, the reader receives the tag's backscattered signal at the same antenna. After cancelling the self-interference from its transmission, the reader detects the backward bit stream by BPSK demodulation/detection of the backscattered signals in the intervals corresponding to  the transmitted  on-chips that are known to the reader. The reader discards the received signals in other intervals since they are interference. Note that this operation requires the chip-level synchronization between a pair of intended reader and tag.

Next, consider the tag side of the link. During the off-chips of the assigned TH-SS pattern, the tag disconnects the modulation block and harvests energy from the other readers' transmissions (see the tag architecture in Fig.~\ref{fig:circuit}). Given the RF energy harvesting efficiency $\eta$, only $\eta$ portion of the RF receive power is harvested. During the two on-chip intervals of each instance of the pattern (or equivalently each symbol), the tag connects the modulation block and detects a forward bit by comparing the amounts of energy obtained from the two intervals using an energy detector.  Based on the detection results, the tag estimates  the transmitted on-chip positions and modulated/backscattered the signals in the corresponding chip intervals. 
Note that interference signal in either one or two of the on-chips may result in failure of the tag on detecting the  corresponding forward bit.
The variable impedance at the tag (see Fig.~\ref{fig:circuit}) is implemented by switching between two fixed impedances chosen to generate two reflection coefficients with the same magnitudes, namely $\sqrt{\rho}$, but different phase shifts, namely  zero and $180$ degrees. Then adapting the variable impedance to the backward bit streams modulates the backscatter signals with the bits by BPSK. 
Let the BPSK symbol transmitted by Tag $k$ be denoted as  $q_{k}$ with $q_{k}\in \{1, -1\}$. Thus, during the two on-chip intervals, the tag backscatters a fraction, denoted as $\rho$ with $\rho \in (0, 1)$, of the incident signal power and harvests the remaining fraction of $\eta (1-\rho)$. 

Combining the aforementioned reader and tag operations realize the full-duplex BackCom with symmetric backward and forward IT rates.

\section{Time-Hopping Full-Duplex BackCom: IT Performance}\label{Sec_IT}
In this section, we analyze the BER for the backward and forward IT. For simplicity,  a two-link BackCom system is considered as shown in  Fig.~\ref{fig:signals}. The results are  generalized for  the $K$-link sysetm  in Sec.~\ref{Sec_K}. The analysis in this section focuses on the typical link, Link~$1$, without loss of generality. 
\vspace{-0.5cm}
\subsection{BER at the Reader for Backward IT} \label{Sec_IT_backward}
Consider demodulation and detection of an arbitrary bit at Reader~1. 
As discussed in Sec.~\ref{Sec_sys} and Sec.~\ref{Sec_Design}, the receive baseband signal at Reader~1 during its transmitted on-chip $C_1$ can be written as
\begin{equation} \label{Reader_signals}
\begin{aligned}
r_1 
&= \sqrt{\myP} f_{11}  \sqrt{\rho}  b_{11} q_1 
+ \myindex{C_1 \in \mathcal{S}_2 } \sqrt{\myP}  f_{12} \sqrt{\rho} b_{21} q_2 \\
&+ \myindex{C_1 = C_2} \sqrt{\myP} \left(h_{21}  + f_{21}  \sqrt{\rho} b_{11}  q_1 + f_{22}  \sqrt{\rho} b_{21} q_2 \right) + z_{\text{reader},1},
\end{aligned}
\end{equation}
where the two indicator functions indicate whether Tag~2 is backscattering and whether Reader~2 is transmitting during Reader~1's transmitted on-chip $C_1$, respectively.
{$f_{mn}$ is the forward channel coefficient between Reader~$m$ and Tag~$n$.
$b_{mn}$ is the backward channel coefficient between Tag~$m$ and Reader~$n$. 
$h_{mn}$ is the channel coefficient between Readers~$m$ and $n$, while $g_{mn}$ is the channel coefficient between Tags~$m$ and~$n$. We also assume reciprocity between the forward and backward channels, i.e., $f_{mn} = b^*_{nm}$.}
The first term and the second~term~in~\eqref{Reader_signals} correspond to the useful signal R1-T1-R1 and the interference signal R1-T2-R1, respectively,~and~the third term corresponds to the interference signals R2-R1, R2-T1-R1, and R2-T2-R1 as~illustrated~in~Fig.~\ref{fig:signals}.

Given interference at Reader~1, the BER of coherent detection can be close to the maximum of $0.5$ and thus is assumed as $0.5$ when Reader~1 suffers interference for simplicity. Then the BER at Reader~$1$ can be written as 
\begin{equation} \label{slow_fading_reader_ber_0}
\Preader = 
\Pbpsk \myprobability{C_1 \notin \mathcal{S}_2} + 0.5\ \myprobability{C_1 \in \mathcal{S}_2}
=  \Pbpsk \left(p_0+ \frac{p_1}{2}\right) + \frac{1}{2}\left(\frac{p_1}{2} +p_2\right),
\end{equation}
where $\Pbpsk$ denotes the BER for BPSK detection without interference, and together with the probabilities $\{p_0, p_1, p_2\}$ are defined as follows: 
\begin{equation} \label{overlap_prob}
\begin{aligned}
&\Pbpsk  =  \myexpect{}{Q\left(\sqrt{{2 \myP \rho \vert  f_{11} b_{11} \vert^2}/{\sigma^2_{\text{reader}}} }\right)},\\
&p_0 \triangleq \myprobability{|\mathcal{S}_1 \cap \mathcal{S}_2| = 0} = {\dbinom{N-2}{2}}\Big/{ \dbinom{N}{2}} = \frac{(N-2)(N-3)}{N(N-1)},\\
&p_1 \triangleq \myprobability{|\mathcal{S}_1 \cap \mathcal{S}_2| = 1} = 1- {\dbinom{N-2}{2}}\Big/{ \dbinom{N}{2}} -  {1}\Big/{ \dbinom{N}{2}}= \frac{4(N-2)}{N(N-1)},\\
&p_2 \triangleq \myprobability{|\mathcal{S}_1 \cap \mathcal{S}_2| = 2} = {1}\Big/{ \dbinom{N}{2}}=\frac{2}{N(N-1)}.
\end{aligned}
\end{equation}
Substituting  \eqref{overlap_prob} into \eqref{slow_fading_reader_ber_0}, we obtain the BER for the backward IT at the reader: 
\begin{proposition} \label{proposition_1}
The expected BER for the backward IT is
\begin{equation} \label{slow_fading_reader_ber}
\Preader = \frac{N-2}{N} \Pbpsk +\frac{1}{N},
\end{equation}
where it can be derived straightforwardly based on the statistics of the channel coefficients that
\begin{equation} \label{PBPSK}
\Pbpsk
= \left\lbrace
\begin{aligned}
& Q\left(\sqrt{{2 \myP \rho \vert  f_{11} b_{11} \vert^2}/{\sigma^2_{\text{reader}}} }\right), &&\text{static channel,} \\
&\frac{1}{2} \left(1-\exp\left({\frac{d^{2\lambda}_{11} \sigma^2_{\text{reader}}}{4 \myP \rho }}\right) \mathrm{erfc}\left(\frac{d^{\lambda}_{11}}{2
} \sqrt{\frac{\sigma^2_{\text{reader}}}{\myP \rho} }\right)\right),  &&\text{Rayleigh fading channel,}
\end{aligned}
\right.
\end{equation}
and $Q(\cdot)$ and $\mathrm{erfc}(\cdot)$ are the Q-function and the complementary error function, respectively.
\end{proposition}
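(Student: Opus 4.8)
The plan is to obtain \eqref{slow_fading_reader_ber} by directly substituting the combinatorial overlap probabilities of \eqref{overlap_prob} into the BER expression \eqref{slow_fading_reader_ber_0}, and then to evaluate $\Pbpsk$ separately for the two channel models of \eqref{PBPSK}.

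First I would simplify the two bracketed coefficients in \eqref{slow_fading_reader_ber_0} using the closed forms of $p_0,p_1,p_2$. The coefficient multiplying $\Pbpsk$ telescopes,
\[
p_0 + \frac{p_1}{2} = \frac{(N-2)(N-3)+2(N-2)}{N(N-1)} = \frac{(N-2)(N-1)}{N(N-1)} = \frac{N-2}{N},
\]
and the residual constant reduces to $\frac{1}{2}\left(\frac{p_1}{2}+p_2\right) = \frac{1}{2}\cdot\frac{2(N-1)}{N(N-1)} = \frac{1}{N}$. Combining the two immediately yields \eqref{slow_fading_reader_ber}; this part is purely algebraic and is essentially the whole content of the main identity.

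The remaining task is to evaluate $\Pbpsk = \myexpect{}{Q(\sqrt{2\myP\rho|f_{11}b_{11}|^2/\sigma^2_{\text{reader}}})}$. For the static channel the coefficients are fixed, so the expectation is vacuous and $\Pbpsk$ is just the $Q$-function, giving the first branch of \eqref{PBPSK}. For the Rayleigh case I would first invoke the reciprocity $f_{11}=b_{11}^*$ to collapse the round-trip gain, $|f_{11}b_{11}|^2 = |f_{11}|^4$, so that $Y \triangleq |f_{11}|^2$ is exponentially distributed; after folding the unit-variance small-scale fading into the path loss, its mean is $d_{11}^{-\lambda}$ (equivalently, rate $d_{11}^{\lambda}$). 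Since $\sqrt{|f_{11}|^4}=Y$, the average reduces to $\myexpect{}{Q(\sqrt{c}\,Y)}$ with $c = 2\myP\rho/\sigma^2_{\text{reader}}$. Writing $Q(x)=\tfrac12\mathrm{erfc}(x/\sqrt{2})$ and applying the Laplace-type identity
\[
\int_0^\infty e^{-pt}\,\mathrm{erfc}(at)\,dt = \frac{1}{p}\left(1 - e^{p^2/(4a^2)}\,\mathrm{erfc}\!\left(\frac{p}{2a}\right)\right)
\]
with $p=d_{11}^{\lambda}$ and $a=\sqrt{c/2}$, then back-substituting $c$, produces the second branch of \eqref{PBPSK}.

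The main obstacle is this Rayleigh-fading step. One must recognize via reciprocity that the product (backscatter) channel collapses to a single $|f_{11}|^4$ factor, so that the relevant randomness is one exponential variable rather than a product of two independent exponentials (which would instead give a Bessel-type density and a different closed form), correctly identify the exponential rate as $d_{11}^{\lambda}$, and match the resulting $\exp$--$\mathrm{erfc}$ expression term by term. Everything else is routine bookkeeping.
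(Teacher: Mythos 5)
Your proposal is correct and follows exactly the route the paper intends: the paper gives no explicit proof beyond "substituting \eqref{overlap_prob} into \eqref{slow_fading_reader_ber_0}" and "derived straightforwardly," and your algebra for $p_0+\tfrac{p_1}{2}=\tfrac{N-2}{N}$ and $\tfrac{1}{2}(\tfrac{p_1}{2}+p_2)=\tfrac{1}{N}$, together with the Rayleigh-fading evaluation, fills in those steps correctly. In particular you correctly exploit the stated reciprocity $f_{11}=b_{11}^{*}$ so that $\vert f_{11}b_{11}\vert^2=\vert f_{11}\vert^4$ reduces the average to a single exponential variate with rate $d_{11}^{\lambda}$, and the Laplace--$\mathrm{erfc}$ identity reproduces the second branch of \eqref{PBPSK} exactly.
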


The BER for the backward IT decreases inversely with the reflection coefficient~$\rho$.
For the high SNR regime ($\myP/{\sigma^2_{\text{reader}}}\rightarrow \infty$), the BER reduces to $P_{\text{reader}}\approx \frac{1}{N}$, which is caused by the interference and decreases inversely with the TH-SS sequence length $N$. 

It is interesting to investigate the BER scaling law w.r.t. the sequence length $N$.
We consider two schemes when increasing $N$: (i) the fixed chip power (FCP) scheme which fixes the chip transmit power $\myP$, and (ii) the fixed chip energy (FCE) scheme which fixes the chip transmit energy denoted by $\Echip$, and $\Echip = \myP \frac{T}{N}$, i.e., $\myP$ increases linearly with $N$.

For the static channel, 
based on \eqref{slow_fading_reader_ber} and \eqref{PBPSK}, as the sequence length $N\rightarrow \infty$, the asymptotic BER for the FCP scheme is 
$ 
\Preader \approx \frac{N-2}{N} \Pbpsk,
$
and noise is the dominant factor for causing detection errors. 
While the asymptotic BER for the FCE scheme is 
$ 
\Preader \approx \frac{1}{N},
$
and interference is the dominant factor for causing detection errors.

For Rayleigh fading channel,
based on \eqref{slow_fading_reader_ber} and \eqref{PBPSK}, as the sequence length $N\rightarrow \infty$, the asymptotic BER expression for both the FCP and FCE schemes is
$ 
\Preader \approx \frac{N-2}{N} \Pbpsk,
$
and noise is the dominant factor for causing detection errors.

\vspace{-0.2cm}
\subsection{BER at the Tag for Forward IT}\label{Sec_IT_forward}
We investigate the BER at Tag~1 in three different TH-SS pattern overlapping scenarios.
Assuming that $\Ptagi{0}$, $\Ptagi{1}$ and $\Ptagi{2}$ are the BER conditioned on the events $\vert \mathcal{S}_1 \cap \mathcal{S}_2 \vert = 0$, $\vert \mathcal{S}_1 \cap \mathcal{S}_2 \vert = 1$ and $\vert \mathcal{S}_1 \cap \mathcal{S}_2 \vert = 2$, respectively, the BER at Tag~1 is
\begin{equation} \label{Tag_BER_first}
\Ptag = \sum\limits_{n=0}^{2} p_{n} \Ptagi{n}.
\end{equation}
We further calculate $\Ptagi{n}$ as follows (see pattern-overlapping scenarios in Fig.~\ref{fig:all_chips}):
\subsubsection{BER Given Non-Overlapping Scenario}
Tag~1's receive passband signal in the transmitted on-chip $C_1$ is
\vspace{-0.3cm}
\begin{equation}
y_1(t)= \sqrt{2\myP \etabs} \mathfrak{R} \{f_{11}  e^{j \omega t}\}  + z_{\mathrm{tag},1}(t).
\end{equation}
Thus, the receive signal power in the transmitted on-chip $C_1$ is
\begin{equation} \label{non_overlap_power}
\Prxa \triangleq \Prx{0}{} = \myP \etabs \vert f_{11} \vert ^2, 
\end{equation}
while since neither Reader~1 nor Reader~2 is transmitting during the other on-chip $\mathcal{S}_1\backslash C_1$, the receive signal power in the other on-chip is 
$
\Prxnew{0}{} = 0.
$

Based on \cite{EnergyDetection67}, scaling by the two-side power spectrum density of noise signal $z_{\text{tag},1}(t)$, the received energy during the transmitted on-chip $C_1$, $E_1$, follows a non-central chi-square distribution with $2$ degrees of freedom and parameter {$\gamma =\Prxa /{\sigma^2_{\text{tag}}}$, i.e., $\chi'^2(\gamma)$.}
Similarly, for the other on-chip $\mathcal{S}_1 \backslash C_1$, the scaled received energy $\breve{E}_1$, follows $\chi'^2(0)$.
Therefore, comparing the scaled receive energy between the chips $C_1$ and $\mathcal{S}_1 \backslash C_1$, i.e., $E_1$ and $\breve{E}_1$, the detection error probability is
\begin{equation} \label{non_overlap_ber}
\begin{aligned}
\Ptagi{0}
&= {1-\myprobability{\breve{E}_1 < E_1}} =
{1 - \int_{0}^{\infty} F_{\breve{E}_1} (x) f_{E_1}(x) \mathrm{d}x},
\end{aligned}
\end{equation}
where $F_{\breve{E}_1}(\cdot)$ and $f_{E_1}(\cdot)$ are the cumulative distribution function (cdf) and the probability density function (pdf) of the distributions $\chi'^2(0)$ and $\chi'^2( \Prxa/\sigma^2_{\text{tag}})$, respectively.
For generality, we define function $G(a,b)$ as
\begin{equation} \label{G_definition}
G(a,b) \triangleq 1- \myprobability{E_A < E_B}
= \int_{0}^{\infty} \frac{1}{2} Q_1(\sqrt{ a},\sqrt{x}) \exp\left(-(x+ b)/2\right) I_0\left(\sqrt{b x}\right) \mathrm{d}x,
\end{equation}
where $E_A$ and $E_B$ follows non-central chi-square distribution with $2$ degrees of freedom and parameters $a$ and $b$, respectively, and $Q_M(\cdot,\cdot)$ and $I_{\alpha}(\cdot)$ denotes the Marcum Q-function and the modified Bessel function of the first kind, respectively~\cite{Handbook}.
Thus, the BER in \eqref{non_overlap_ber} is represented~as 
\begin{equation} \label{non-overlap_ber}
\Ptagi{0} = {G(0, \Prxa/\sigma^2_{\text{tag}})}.
\end{equation}

\subsubsection{BER Given Single-Chip Overlapping Scenario}
There are four transmission cases each with the same probability (see transmission cases in Fig.~\ref{fig:all_chips}):

	\underline{Case 1}: The two readers are using the overlapping chip for transmission.
	Tag 1's receive signal in the transmitted on-chip $C_1$ consists of four signals, i.e., R1-T1, R1-T2-T1, R2-T1 and R2-T2-T1, thus, the receive signal power in the chip $C_1$ is  
	\begin{equation}
	\begin{aligned}
	&\Prxb  \triangleq \Prx{1}{(C_1=C_2=\mathcal{S}_1 \cap \mathcal{S}_2)} = \etabs \myP \left\vert f_{11}  + f_{12} \sqrt{\rho} g_{21} q_2 + f_{21}  + f_{22} \sqrt{\rho} g_{21} q_2  \right\vert^2,
	\end{aligned}
	\end{equation}
	while the receive signal power in the chip $\mathcal{S}_1 \backslash C_1$ is $\Prxnew{1}{(C_1=C_2=\mathcal{S}_1 \cap \mathcal{S}_2)} = 0$.
	Based on \eqref{G_definition}, considering the randomness of both the channel coefficients\footnote{Note that we have included all channel coefficients as the potential random~variables~over~which~the~expectation~is~taken. In Rayleigh fading channel, all channel coefficients are random variables, while in the static channel they are constants. For ease of presentation, we continue to use such notations in the rest of the paper.} and Tag~2's modulated signal, the BER~is
	\vspace{-0.5cm}
	\begin{equation} \label{single_overlap_1_ber}
	\Ptagi{1}(C_1=C_2=\mathcal{S}_1 \cap \mathcal{S}_2) =  \myexpect{f_{11},f_{12},f_{21},f_{22},g_{21},q_2}{G(0,\Prxb/\sigma^2_{\text{tag}})}.
	\end{equation}

	\underline{Case 2}: Reader~1 is using the non-overlapping chip, while Reader~2 is using the overlapping chip for transmission.
	The receive signal in the chip $C_1$ is signal R1-T1, since both Reader~2 and Tag~2 are not active in the chip, thus, the receive signal power in the chip $C_1$ is the same with~\eqref{non_overlap_power}, i.e.,
	$
	\Prx{1}{(C_1\neq C_2=\mathcal{S}_1 \cap \mathcal{S}_2)} = \Prxa.
	$
	While the receive signal in $\mathcal{S}_1 \backslash C_1$ consists of signals R2-T1 and R2-T2-T1, and
	\vspace{-0.2cm}
	\begin{equation} \label{single_overlap_2_power}
	\Prxc \triangleq \Prxnew{1}{(C_1\neq C_2=\mathcal{S}_1 \cap \mathcal{S}_2)} = \etabs \myP \left\vert f_{21}  + f_{22} \sqrt{\rho} g_{21} q_2 \right\vert^2.
	\end{equation}
	Thus, the BER~is 
	\vspace{-0.3cm}
	\begin{equation}
	\Ptagi{1}(C_1\neq C_2=\mathcal{S}_1 \cap \mathcal{S}_2) =  \myexpect{f_{21},f_{22},g_{21},q_2}{G(\Prxc/\sigma^2_{\text{tag}},\Prxa/\sigma^2_{\text{tag}})}.
	\end{equation}

	\underline{Case 3}: Reader~1 is using the overlapping chip, while Reader-2 is using non-overlapping chip for transmission.
	The receive signal in the chip $C_1$ consists of two signals, R1-T1 and R1-T2-T1, thus, the receive signal power in the chip is 
	\begin{equation} \label{single_overlap_3_power}
	\Prxd \triangleq \Prx{1}{(C_2\neq C_1=\mathcal{S}_1 \cap \mathcal{S}_2)} = \etabs \myP \left\vert f_{11}  + f_{12} \sqrt{\rho} g_{21} q_2 \right\vert^2,	
	\end{equation}
	and the receive signal power in the chip $\mathcal{S}_1 \backslash C_1$ is $\Prxnew{1}{(C_2\neq C_1=\mathcal{S}_1 \cap \mathcal{S}_2)} = 0$.
	Thus,~the~BER~is 
	\begin{equation} \label{single_overlap_3_ber}
	\Ptagi{1}(C_2\neq C_1=\mathcal{S}_1 \cap \mathcal{S}_2) =  \myexpect{f_{11},f_{12},g_{21},q_2}{G(0,\Prxd/\sigma^2_{\text{tag}})}.
	\end{equation}

	\underline{Case 4}: Reader~1 and Reader~2 are using non-overlapping chips for transmission. The receive signal power in the chips $C_1$ and $\mathcal{S}_1 \backslash C_1$ are $\Prxa$ and $0$, respectively, and thus, 
	\begin{equation}
		\Ptagi{1}(C_1 \neq \mathcal{S}_1 \cap \mathcal{S}_2,C_2 \neq \mathcal{S}_1 \cap \mathcal{S}_2) =  {G(0,\Prxa/\sigma^2_{\text{tag}})}.
	\end{equation}
\subsubsection{BER Given Dual-Chip Overlapping Scenario}
There are two transmission cases each with the same probability (see transmission cases in Fig.~\ref{fig:all_chips}):

	\underline{Case 1}: The two readers are using the same chip for transmission.
	We see that the receive signal power in the chips $C_1$ and $\mathcal{S}_1 \backslash C_1$ are
	$\Prxb$ and $0$, respectively, and thus,  
		\begin{equation}
			\Ptagi{2}{(C_1 = C_2)} = \myexpect{f_{11},f_{12},f_{21},f_{22},g_{21},q_2}{G(0,\Prxb/\sigma^2_{\text{tag}})}.
		\end{equation}

	\underline{Case 2}: The two readers are using different chips for transmission.
		The receive signal in the chip $C_1$ consists of signals R1-T1 and R1-T2-T1, thus, the receive signal power in the non-overlapping chip is the same with \eqref{single_overlap_3_power}, i.e.,
		$
		\Prx{2}{(C_1\neq C_2)} = \Prxd.
		$
		While the receive signal in $\mathcal{S}_1 \backslash C_1$ consists of signals R2-T1 and R2-T2-T1 which is the same with \eqref{single_overlap_2_power}, i.e., 
		$
		\Prxnew{2}{(C_1\neq C_2)} = \Prxc.
		$
		Thus, the BER is 
		\begin{equation}
		\Ptagi{2}(C_1\neq C_2) =  \myexpect{f_{11},f_{12},f_{21},f_{22},g_{21},q_2}{G(\Prxc/\sigma^2_{\text{tag}},\Prxd/\sigma^2_{\text{tag}})}.
		\end{equation}

\subsubsection{Main Results and Discussions}
Based on the analysis above and \eqref{Tag_BER_first}, the expected BER for the forward IT is 
\begin{equation} \label{first_ave_ber}
\begin{aligned}
\Ptag
&= \mathbb{E}_{f_{11},f_{12},f_{21},f_{22},g_{21},q_2}\left\lbrace \left(p_0 + \frac{1}{4}p_1\right) G(0, \Prxa /\sigma^2_{\text{tag}}) \right. \\
&+ p_1 \left(\frac{1}{4} G(0, \Prxb/\sigma^2_{\text{tag}}) + \frac{1}{4} G(\Prxc/\sigma^2_{\text{tag}}, \Prxa/\sigma^2_{\text{tag}}) + \frac{1}{4} G(0, \Prxd/\sigma^2_{\text{tag}})\right)\\
&\left.+ p_2 \left(\frac{1}{2} G(0, \Prxb/\sigma^2_{\text{tag}}) + \frac{1}{2} G(\Prxc/\sigma^2_{\text{tag}}, \Prxd/\sigma^2_{\text{tag}}) \right) \right\rbrace.
\end{aligned}
\end{equation}
For tractability, we consider the high SNR regime which means $\myP/\sigma^2_{\text{tag}} >> 0$, 
and ignore the noise effect on information detection, and thus,
$
G(a,b) \approx \myindex{a > b}.
$

From \eqref{first_ave_ber}, we further obtain
\begin{equation} \label{fading_tag_outage}
\begin{aligned}
&\Ptag
= \frac{N-2}{N(N-1)} \myprobability{\Prxc > \Prxa } + \frac{1}{N(N-1)} \myprobability{\Prxc > \Prxd }\\
&= \frac{N-2}{N(N-1)} \myprobability{\vert f_{21} + f_{22} \sqrt{\rho} g_{21} q_2\vert^2 >  \vert f_{11} \vert^2} \\
&+ \frac{1}{N(N-1)} \myprobability{\vert f_{21} + f_{22} \sqrt{\rho} g_{21} q_2 \vert^2 > \vert f_{11} + f_{12} \sqrt{\rho} g_{21} q_2 \vert^2}.
\end{aligned}
\end{equation}
Thus, as the sequence length $N \rightarrow \infty$,
\begin{equation} \label{high_N_Ptag}
\begin{aligned}
\Ptag
\approx 
\frac{1}{N} \myprobability{\vert f_{21} + f_{22} \sqrt{\rho} g_{21} q_2\vert^2 >  \vert f_{11} \vert^2 }.
\end{aligned}
\end{equation}

Therefore, increasing the sequence length $N$ reduces the BER for the forward IT. 

Since $q_2$ takes value with the same probability from $\{e^{j0}, e^{j\pi}\}$, we have the following result:
\begin{proposition} \label{first_ave_high_snr_ber}
For the static channel, the expected BER for the forward IT is 
\begin{equation} 
\begin{aligned}
&\Ptag
= \frac{N-2}{2N(N-1)} \left(
\myindex{ \vert f_{21} + f_{22} \sqrt{\rho} g_{21} \vert^2 >  \vert f_{11} \vert^2}
+
\myindex{ \vert f_{21} - f_{22} \sqrt{\rho} g_{21} \vert^2 >  \vert f_{11} \vert^2}
\right)\\
&\!+\!
\frac{1}{2N(\!N\!-\!1\!)} \!\left(\!
\myindex{\!\vert f_{21} \!+\! f_{22} \sqrt{\rho} g_{21} \vert^2 \!\!>\!\! \vert  f_{11}\! +\! f_{12} \sqrt{\rho} g_{21} \vert^2}
\!\!+\!
\myindex{\! \vert f_{21} \!-\! f_{22} \sqrt{\rho} g_{21} \vert^2 \!\!>\!\!  \vert f_{11} \!- \!f_{12} \sqrt{\rho} g_{21}  \vert^2}
\!\right)\!.
\end{aligned}
\end{equation}
\end{proposition}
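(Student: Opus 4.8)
The plan is to specialize the general high-SNR expression \eqref{fading_tag_outage} to the static channel, where the only surviving source of randomness is the interfering tag's BPSK symbol $q_2$. First I would invoke the convention recorded in the footnote that, under the static channel, all the coefficients $f_{11}, f_{12}, f_{21}, f_{22}, g_{21}$ are deterministic constants. Consequently, in \eqref{fading_tag_outage} the two quantities $\myprobability{\Prxc > \Prxa}$ and $\myprobability{\Prxc > \Prxd}$ are probabilities taken solely over $q_2$, which is equiprobable on $\{e^{j0}, e^{j\pi}\} = \{+1,-1\}$. I am free to assume the reduction $G(a,b)\approx\myindex{a>b}$ already established in the excerpt, so each conditional BER has already collapsed to a comparison of receive powers.

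The key step is then to resolve each probability by conditioning on the two equally likely realizations of $q_2$. Recalling $\Prxc = \etabs \myP \vert f_{21} + f_{22}\sqrt{\rho}g_{21}q_2\vert^2$, $\Prxa = \etabs \myP \vert f_{11}\vert^2$, and $\Prxd = \etabs \myP \vert f_{11} + f_{12}\sqrt{\rho}g_{21}q_2\vert^2$, the common factor $\etabs\myP$ cancels inside every comparison, so that
\[
\myprobability{\Prxc > \Prxa} = \tfrac{1}{2}\myindex{\vert f_{21} + f_{22}\sqrt{\rho}g_{21}\vert^2 > \vert f_{11}\vert^2} + \tfrac{1}{2}\myindex{\vert f_{21} - f_{22}\sqrt{\rho}g_{21}\vert^2 > \vert f_{11}\vert^2},
\]
and analogously for $\myprobability{\Prxc > \Prxd}$, the branch $q_2=-1$ simply reversing the sign of every $f_{22}\sqrt{\rho}g_{21}$ and $f_{12}\sqrt{\rho}g_{21}$ term. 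Substituting these two expansions into \eqref{fading_tag_outage} and absorbing the factor $1/2$ into the prefactors $\frac{N-2}{N(N-1)}$ and $\frac{1}{N(N-1)}$ produces precisely the claimed coefficients $\frac{N-2}{2N(N-1)}$ and $\frac{1}{2N(N-1)}$, which completes the derivation.

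There is no genuine analytical obstacle here; once \eqref{fading_tag_outage} and the high-SNR indicator reduction are granted, the argument is deterministic bookkeeping over the two signs of $q_2$. The one point demanding care is to pair the indicator arguments correctly: the first probability compares $\Prxc$ against the interference-free power $\Prxa = \etabs\myP\vert f_{11}\vert^2$, whereas the second compares $\Prxc$ against $\Prxd = \etabs\myP\vert f_{11} + f_{12}\sqrt{\rho}g_{21}q_2\vert^2$, so within each branch the same realization of $q_2$ must be applied simultaneously to both sides of the second comparison. Keeping the right-hand sides consistent with the correct sign of $q_2$ is the only place a careless sign error could enter.
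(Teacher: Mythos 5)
Your proposal is correct and follows exactly the route the paper takes: starting from \eqref{fading_tag_outage} with the high-SNR reduction $G(a,b)\approx\myindex{a>b}$ already in place, the paper likewise observes that in the static channel the only remaining randomness is $q_2\in\{+1,-1\}$ and averages over its two equiprobable realizations, which yields the factor $1/2$ in the coefficients. Your remark about applying the same realization of $q_2$ to both sides of the $\Prxc$ versus $\Prxd$ comparison is exactly the sign-pairing visible in the stated result.
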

Although the effect of reflection coefficient $\rho$ on the BER for the forward IT depends on the specific values of the channel coefficients, for the typical case that the channel between reader-tag pair is better than the cross reader-tag channel, i.e., $\vert f_{11} \vert^2 > \vert f_{21} \vert^2$, $\rho = 0$ minimizes $\Ptag$ to approach  zero since all the indicator functions in Proposition~\ref{first_ave_high_snr_ber} is equal to zero.

\begin{proposition} \label{Tag_BER_fading}
	For Rayleigh fading channel, the expected BER for the forward IT is 
\begin{equation} 
\begin{aligned}
\Ptag &=\! \frac{1}{N} 
\!-\!\frac{N\!-\!2}{N(N\!-\!1)}  \frac{1}{\rho} \left(\!\frac{d_{22}d_t}{d_{11}}\!\right)^\lambda
\!\exp\left(\!
\frac{d^\lambda_t}{\rho}
\left(\!
\left(\frac{d_{22}}{d_{21}}\right)^\lambda \!+\!
\left(\frac{d_{22}}{d_{11}}\right)^\lambda
\!\right)
\!\right)
\Gamma\left(\!
0,\ 
\frac{d^\lambda_t}{\rho}\left(\frac{d_{22}}{d_{21}}\right)^\lambda \!\!+\!
\left(\frac{d_{22}}{d_{11}}\right)^\lambda
\!\right)\\
&-\frac{1}{N(N\!-\!1)} \left(
\frac{d^\lambda_{22}}{d^\lambda_{12}+d^\lambda_{22}}
+\frac{d^\lambda_{t}}{\rho}  \frac{\frac{1}{d^\lambda_{11}d^\lambda_{22}}-\frac{1}{d^\lambda_{21}d^\lambda_{12}}}{\left(\frac{1}{d^\lambda_{12}}+\frac{1}{d^\lambda_{22}}\right)^2}
\exp\left(
\frac{d^\lambda_{t}}{\rho} \frac{\frac{1}{d^\lambda_{11}}+\frac{1}{d^\lambda_{21}}}{\frac{1}{d^\lambda_{12}}+\frac{1}{d^\lambda_{22}}}
\right)
\Gamma\left(
0,\ 
\frac{d^\lambda_{t}}{\rho} \frac{\frac{1}{d^\lambda_{11}}+\frac{1}{d^\lambda_{21}}}{\frac{1}{d^\lambda_{12}}+\frac{1}{d^\lambda_{22}}}
\right)
\right),
\end{aligned}
\end{equation}
where $d_t$ is the distance between Tag~1 and Tag~2, and $\Gamma\left(\cdot,\cdot\right)$ is the incomplete gamma function.
\end{proposition}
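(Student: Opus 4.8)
The plan is to evaluate, under the Rayleigh fading model, the two probabilities appearing in the exact (high-SNR) expression \eqref{fading_tag_outage}. I take each channel coefficient to be an independent zero-mean complex Gaussian whose variance is the inverse of the corresponding large-scale path loss, i.e. $f_{mn}\sim\mathcal{CN}(0,d_{mn}^{-\lambda})$ and $g_{21}\sim\mathcal{CN}(0,d_t^{-\lambda})$, all mutually independent. The central step is to \emph{condition on the tag-to-tag coefficient} $g_{21}$. Writing $w=\vert g_{21}\vert^2$, which is exponential with mean $d_t^{-\lambda}$, the combination $f_{21}+f_{22}\sqrt{\rho}g_{21}q_2$ is, given $g_{21}$, a sum of two independent complex Gaussians and hence itself $\mathcal{CN}\!\left(0,\ d_{21}^{-\lambda}+\rho w\,d_{22}^{-\lambda}\right)$; likewise $f_{11}+f_{12}\sqrt{\rho}g_{21}q_2$ is $\mathcal{CN}\!\left(0,\ d_{11}^{-\lambda}+\rho w\,d_{12}^{-\lambda}\right)$. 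Because these two combinations involve disjoint sets of forward coefficients ($\{f_{21},f_{22}\}$ versus $\{f_{11},f_{12}\}$), they are conditionally independent given $g_{21}$; and since $q_2\in\{1,-1\}$ affects neither conditional variance, the modulation symbol drops out entirely.

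Next I would reduce each probability to a ratio of conditional means using the elementary fact that, for independent exponentials $X,Y$ with means $\mu_X,\mu_Y$, one has $\myprobability{X>Y}=\mu_X/(\mu_X+\mu_Y)$. The magnitude-squared quantities above are exponential with the stated conditional variances, so given $w$ the first probability is
\begin{equation*}
\myprobability{\vert f_{21}+f_{22}\sqrt{\rho}g_{21}q_2\vert^2>\vert f_{11}\vert^2\,\big\vert\,g_{21}}=\frac{d_{21}^{-\lambda}+\rho w\,d_{22}^{-\lambda}}{d_{21}^{-\lambda}+\rho w\,d_{22}^{-\lambda}+d_{11}^{-\lambda}},
\end{equation*}
while the second, in which both sides carry the factor $w$, equals the ratio of $\bigl(d_{21}^{-\lambda}+\rho w\,d_{22}^{-\lambda}\bigr)$ to $\bigl(d_{11}^{-\lambda}+d_{21}^{-\lambda}\bigr)+\rho w\bigl(d_{12}^{-\lambda}+d_{22}^{-\lambda}\bigr)$.

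Finally I would average over $w$ against the exponential density $d_t^{\lambda}e^{-d_t^{\lambda}w}$, writing each conditional probability as $1$ minus a remainder so that the leading constants combine through $\frac{N-2}{N(N-1)}+\frac{1}{N(N-1)}=\frac{1}{N}$, which yields the leading $\frac{1}{N}$ of the stated formula. The remainder of the first probability is $d_{11}^{-\lambda}/\bigl(d_{21}^{-\lambda}+\rho w\,d_{22}^{-\lambda}+d_{11}^{-\lambda}\bigr)$, whose average is evaluated via the identity $\int_0^\infty\frac{e^{-pw}}{w+a}\,\mathrm{d}w=e^{pa}\,\Gamma(0,pa)$ and produces the $\exp(\cdot)\,\Gamma(0,\cdot)$ factor of the $\frac{N-2}{N(N-1)}$ term. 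For the second probability, whose value is a ratio $\frac{a_1+b_1w}{a_2+b_2w}$, I would first split $\frac{a_1+b_1w}{a_2+b_2w}=\frac{b_1}{b_2}+\frac{a_1b_2-a_2b_1}{b_2(a_2+b_2w)}$; the constant part contributes the path-loss ratio $d_{22}^{\lambda}/(d_{12}^{\lambda}+d_{22}^{\lambda})$ to the remainder, and the proper fraction, averaged by the same integral identity, yields the second $\exp(\cdot)\,\Gamma(0,\cdot)$ term with the sign $\tfrac{1}{d_{11}^{\lambda}d_{22}^{\lambda}}-\tfrac{1}{d_{21}^{\lambda}d_{12}^{\lambda}}$. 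Substituting both averages into \eqref{fading_tag_outage} gives the claimed expression.

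The conceptual crux is the conditioning on $g_{21}$: it is what converts the products of Gaussians into conditionally Gaussian sums and simultaneously decouples the two inner random variables, without which the magnitude-squared quantities are sums and products of dependent non-Gaussian terms with no closed form. The routine—but error-prone—part is the second integral: carrying the four path-loss coefficients correctly through the partial-fraction step and confirming the sign of $\tfrac{1}{d_{11}^{\lambda}d_{22}^{\lambda}}-\tfrac{1}{d_{21}^{\lambda}d_{12}^{\lambda}}$. I also note that, for consistency with the exponential's argument, the factor $\frac{d_t^{\lambda}}{\rho}$ in the first incomplete-gamma term should multiply \emph{both} summands $\left(\frac{d_{22}}{d_{21}}\right)^{\lambda}$ and $\left(\frac{d_{22}}{d_{11}}\right)^{\lambda}$ in its bracket.
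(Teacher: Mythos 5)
Your argument is correct and follows essentially the same route as the paper's Appendix A: condition on $g_{21}$ (and $q_2$) so that the two received powers become independent exponentials, apply $\myprobability{X>Y}=\mu_X/(\mu_X+\mu_Y)$, and then average over $\vert g_{21}\vert^2$ via $\int_0^\infty e^{-pw}/(w+a)\,\mathrm{d}w=e^{pa}\Gamma(0,pa)$, exactly as in \eqref{P_tag_1_2}--\eqref{P_tag_2_2}. Your observation that $\frac{d_t^\lambda}{\rho}$ must multiply both summands inside the first incomplete-gamma argument (to match the exponential's argument) correctly identifies a parenthesization typo that the paper carries in both the proposition and its appendix.
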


\begin{proof}
	See Appendix A.
\end{proof}
\noindent For the typical case that
$d_{11}<d_{21}$ and $d_{22}<d_{12}$, i.e., each reader-tag pair distance is smaller than the cross reader-tag distance,
we have $\frac{1}{d^\lambda_{11}d^\lambda_{22}}-\frac{1}{d^\lambda_{21}d^\lambda_{12}} > 0$ in Proposition~\ref{Tag_BER_fading}, and thus, it can be shown that $\Ptag$ monotonically increases with~$\rho$.

Therefore, for both the static and Rayleigh fading channels, a higher reflection coefficient leads to a higher BER for the forward IT in the typical case, and there is a clear tradeoff between the BER for the forward and backward transmission in terms of $\rho$.

\section{Time-Hopping Full-Duplex BackCom: ET Performance}
We analyze the expected ETR and the energy-outage probability in the following subsections.

\subsection{Expected ETR} \label{Sec_ET_ERT}
In Sec.~\ref{Sec_IT_forward}, we have analyzed the harvested power (energy) in the pair of on-chips. 
While for the off-chips, Tag~1 can harvest energy from Reader~2's IT and Tag~2's backscattering only if $C_2 \notin \mathcal{S}_1$. The probability $\myprobability{C_2 \notin \mathcal{S}_1} = p_0 + \frac{1}{2} p_1$.
Thus, Tag~1's receive signal in the chip $C_2$ consists of two signals when $C_2 \notin \mathcal{S}_1$, i.e., R2-T1 and R2-T2-T1, and the receive signal power in the chip is 
\begin{equation} \label{non-overlap_Eeh}
\Peh \triangleq \eta \myP \left\vert f_{21}  + f_{22} \sqrt{\rho} g_{21} q_2  \right\vert^2.
\end{equation}

Therefore, based on the analysis in Sec.~\ref{Sec_IT_forward}, considering Tag~1's receive signal power in the chips $C_1$, $\mathcal{S}_1 \backslash C_1$ and $C_2$, the expected ETR is 
\begin{equation} \label{first_ave_energy}
\begin{aligned}
\Etag
&= \frac{T}{N} \left(\sum_{n=0}^{2} p_{n} \myexpect{}{\Prx{n}{}+\Prxnew{n}{}} 
+ 
\myprobability{C_2 \notin \mathcal{S}_1} \myexpect{}{\Peh}
\right)\\
&=\frac{T}{N}
\myexpect{f_{11},f_{12},f_{21},f_{22},g_{21},q_2}{\frac{N-2}{N}\left(\Prxa + \Peh\right)+
	\frac{1}{N}\left(\Prxb+\Prxc+\Prxd\right)
}.
\end{aligned}
\end{equation}

\begin{proposition} \label{statiC_average_energy}
For the static channel, the expected ETR is 
\begin{equation} 
\begin{aligned}
&\Etag
=\frac{\eta \myP T }{N}
\left(
\frac{N-2}{N} \left((1-\rho) \vert f_{11} \vert^2 + \frac{1}{2}\left(\vert f_{21} +\sqrt{\rho} f_{22} g_{21} \vert^2 + \vert f_{21} -\sqrt{\rho} f_{22} g_{21} \vert^2\right)\right)  \right.\\
&\left.+
\frac{(1-\rho)}{2N} \left(\vert f_{11}+f_{21} + \sqrt{\rho} (f_{21}+f_{22}) g_{21} \vert^2 
+\vert f_{11}+f_{21} - \sqrt{\rho} (f_{21}+f_{22}) g_{21} \vert^2
\right) \right.\\
&\left.+\!
\frac{(1\!-\!\rho)}{2N}\left(\vert f_{21} \!+\!\sqrt{\rho} f_{22} g_{21} \vert^2 \!+\! \vert f_{21} \!-\!\sqrt{\rho} f_{22} g_{21} \vert^2
\!+\!
\vert f_{11} \!+\!\sqrt{\rho} f_{12} g_{21} \vert^2 \!+\! \vert f_{11} \!-\!\sqrt{\rho} f_{12} g_{21} \vert^2
\right) \right).
\end{aligned}
\end{equation}
\end{proposition}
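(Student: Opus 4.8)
The plan is to evaluate the expectation in \eqref{first_ave_energy} directly, by substituting the per-chip receive-power expressions $\Prxa$, $\Prxb$, $\Prxc$, $\Prxd$ derived in Section~\ref{Sec_IT_forward} together with the off-chip harvesting power $\Peh$ from \eqref{non-overlap_Eeh}. The decisive simplification for the static channel is that every channel coefficient $f_{mn}$ and $g_{21}$ is a deterministic constant, so the joint expectation $\myexpect{f_{11},f_{12},f_{21},f_{22},g_{21},q_2}{\cdot}$ collapses to an expectation over the single discrete random variable $q_2$, the BPSK symbol backscattered by Tag~2.

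First I would record that $q_2$ takes the values $+1$ and $-1$ each with probability $\tfrac12$, and that $\Prxa=\etabs\myP\vert f_{11}\vert^2$ is independent of $q_2$ and hence passes through the expectation unchanged. Each of the remaining powers has the form $(\text{constant})\times\vert a+b\,q_2\vert^2$, so the elementary averaging identity $\myexpect{q_2}{\vert a+b\,q_2\vert^2}=\tfrac12\bigl(\vert a+b\vert^2+\vert a-b\vert^2\bigr)$ applies term by term. Applying it to $\Peh$, $\Prxb$, $\Prxc$ and $\Prxd$ reproduces exactly the symmetric pairs of squared-magnitude terms in the statement: the $\Peh$-average supplies the $\tfrac12\bigl(\vert f_{21}+\sqrt{\rho}f_{22}g_{21}\vert^2+\vert f_{21}-\sqrt{\rho}f_{22}g_{21}\vert^2\bigr)$ contribution on the first line, the $\Prxb$-average supplies the symmetric $\pm$ pair on the second line, and the $\Prxc$ and $\Prxd$ averages supply the four terms on the third line.

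The final step is to collect the prefactors. Factoring the common $\eta\myP$ to the front and carrying the overall $\tfrac{T}{N}$ from \eqref{first_ave_energy} yields the stated $\tfrac{\eta\myP T}{N}$, after which the weight $\tfrac{N-2}{N}$ multiplies the $\Prxa$ and $\Peh$ contributions and the weight $\tfrac1N$ multiplies the $\Prxb+\Prxc+\Prxd$ contributions, exactly as already fixed in the second line of \eqref{first_ave_energy}.

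The main obstacle is not analytical but a matter of careful bookkeeping of the two distinct harvesting-efficiency factors. The on-chip powers $\Prxa$, $\Prxb$, $\Prxc$, $\Prxd$ carry $\etabs=\eta(1-\rho)$ because the tag is simultaneously backscattering, whereas the off-chip power $\Peh$ carries the full $\eta$ since nothing is reflected during an off-chip. After $\eta\myP$ is pulled out, this is precisely why the $\vert f_{11}\vert^2$ term and all the $\Prxb,\Prxc,\Prxd$-derived terms retain an explicit $(1-\rho)$ factor while the $\Peh$-derived term does not; keeping these efficiency factors aligned with the correct $\tfrac{N-2}{N}$ and $\tfrac1N$ weights is the only place where an error could plausibly enter.
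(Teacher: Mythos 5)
Your proposal is correct and follows exactly the route the paper leaves implicit (Proposition~\ref{statiC_average_energy} is stated without a written proof): substituting the chip powers into \eqref{first_ave_energy}, averaging only over $q_2$ via $\myexpect{q_2}{\vert a+b\,q_2\vert^2}=\tfrac12\left(\vert a+b\vert^2+\vert a-b\vert^2\right)$, and keeping the $\eta(1-\rho)$ (on-chip) versus $\eta$ (off-chip) efficiency factors aligned with the weights $\tfrac{N-2}{N}$ and $\tfrac{1}{N}$. One remark: carrying your computation through for $\Prxb$ as defined in the paper yields $\sqrt{\rho}\,(f_{12}+f_{22})\,g_{21}$ in the second line, which suggests the $(f_{21}+f_{22})$ appearing in the stated proposition is a typo rather than a flaw in your argument.
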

Although the effect of reflection coefficient $\rho$ on the expected ETR for the forward ET depends on the specific values of the channel coefficients, for the typical case that direct channel signal is much stronger than the backscattered signal, i.e., $\vert f_{12} g_{21} \vert^2 << \vert f_{11}  \vert^2$ and $\vert f_{22} g_{21} \vert^2 << \vert f_{21}  \vert^2$, $\Etag$ increases inversely with~$\rho$.

\begin{proposition} \label{fading_average_energy}
	For Rayleigh fading channel, the expected ETR is 
\begin{equation} 
\begin{aligned}
\Etag
&= \frac{\eta \myP T}{N} \left(\nu_1 \rho^2 +\nu_2 \rho +\nu_3\right),\\
\end{aligned}
\end{equation}
where 
\begin{equation}
\hspace{-0.1cm}
\begin{aligned}
\nu_1 = - \frac{2}{N} \left(\frac{1}{d^\lambda_{12}d^\lambda_{t}} + \frac{1}{d^\lambda_{22}d^\lambda_{t}} \right),\ 
\nu_2 = \frac{2}{N}
\left( \frac{1}{d^\lambda_{12}d^\lambda_{t}}  - \frac{1}{d^\lambda_{21}}\right)
+
\frac{1}{d^\lambda_{22}d^\lambda_{t}}  - \frac{1}{d^\lambda_{11}},\ 
\nu_3 = \frac{1}{d^\lambda_{11}}+ \frac{1}{d^\lambda_{21}}.
\end{aligned}
\end{equation}
\end{proposition}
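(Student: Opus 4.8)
The plan is to evaluate the expectation in \eqref{first_ave_energy} term by term using the Rayleigh fading statistics stated in Section~\ref{Sec_sys}. Under that model each channel coefficient is a zero-mean complex Gaussian whose variance equals the inverse path loss, so that $\myexpect{}{\vert f_{mn}\vert^2}=1/d^\lambda_{mn}$ and $\myexpect{}{\vert g_{21}\vert^2}=1/d^\lambda_t$, and distinct coefficients are mutually independent; moreover the BPSK symbol $q_2\in\{1,-1\}$ is equiprobable, so $\myexpect{q_2}{q_2}=0$. This makes the fading ETR essentially the static-channel result of Proposition~\ref{statiC_average_energy} with every $\vert f_{mn}\vert^2$ replaced by its mean and all cross correlations averaged to zero.

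First I would substitute the closed forms of $\Prxa$, $\Peh$, $\Prxb$, $\Prxc$, $\Prxd$ (from \eqref{non_overlap_power}, \eqref{non-overlap_Eeh}, \eqref{single_overlap_2_power}, \eqref{single_overlap_3_power} and the Case~1 single-chip expression) into \eqref{first_ave_energy}. Each of these equals $\eta\myP$ (carrying an extra $(1-\rho)$ factor for the $\mathcal{P}$-terms) times a squared magnitude of the form $\vert u+v\sqrt{\rho}\,g_{21}q_2\vert^2$, where $u,v$ are sums of forward coefficients independent of $g_{21}$. Expanding gives $\vert u\vert^2+\rho\vert v\vert^2\vert g_{21}\vert^2+2\sqrt{\rho}\,q_2\,\mathfrak{R}\{u v^\ast g_{21}^\ast\}$. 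Taking the expectation over $q_2$ annihilates the linear cross term, and independence of the coefficients makes every off-diagonal $\myexpect{}{f_{ij}f_{kl}^\ast}$ vanish, so what survives is only the sum of the individual mean-square magnitudes. For instance $\myexpect{}{\Peh}/(\eta\myP)=1/d^\lambda_{21}+\rho/(d^\lambda_{22}d^\lambda_t)$, and analogously $\myexpect{}{\Prxc}/(\eta\myP)=(1-\rho)\left(1/d^\lambda_{21}+\rho/(d^\lambda_{22}d^\lambda_t)\right)$, with the $(1-\rho)$ prefactor turning $\Prxb,\Prxc,\Prxd$ into quadratics in $\rho$.

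Next I would assemble the weighted combination $\tfrac{N-2}{N}\left(\myexpect{}{\Prxa}+\myexpect{}{\Peh}\right)+\tfrac{1}{N}\left(\myexpect{}{\Prxb}+\myexpect{}{\Prxc}+\myexpect{}{\Prxd}\right)$ prescribed by \eqref{first_ave_energy} and collect the result as a polynomial in $\rho$. Grouping the coefficients of $\rho^2$, $\rho$ and $\rho^0$ yields $\nu_1$, $\nu_2$, $\nu_3$ respectively; in particular the $\tfrac{N-2}{N}$ and $\tfrac{1}{N}$ weights combine so that the constant part and the $(1/d^\lambda_{22}d^\lambda_t-1/d^\lambda_{11})$ part of the coefficient of $\rho$ lose their $N$-dependence, leaving the clean forms stated.

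There is no deep obstacle here: the argument is entirely second-moment bookkeeping once the $q_2$-linear and off-diagonal cross terms are eliminated. The only points demanding care are tracking which coefficients enter each power term so the $(1-\rho)$ prefactors are distributed correctly, and verifying that the weights $\tfrac{N-2}{N}$ and the threefold $\tfrac{1}{N}$ contributions combine to exactly cancel the $N$-dependence in $\nu_3$ and in the $(1/d^\lambda_{22}d^\lambda_t-1/d^\lambda_{11})$ portion of $\nu_2$, while retaining the $\tfrac{2}{N}$ factor elsewhere.
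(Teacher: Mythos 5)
Your proposal is correct and follows essentially the same route as the paper's Appendix B: both compute $\myexpect{}{\Prxa},\myexpect{}{\Prxb},\myexpect{}{\Prxc},\myexpect{}{\Prxd},\myexpect{}{\Peh}$ via the identity $\myexpect{}{\vert X+YZ\vert^2}=\myexpect{}{\vert X\vert^2}+\myexpect{}{\vert Y\vert^2}\myexpect{}{\vert Z\vert^2}$ for independent zero-mean factors (your explicit expansion with the $q_2$-linear and off-diagonal terms vanishing is exactly this), and then substitute into \eqref{first_ave_energy} and collect powers of $\rho$. The bookkeeping you describe, including the cancellation of the $N$-dependence in $\nu_3$ and in the $(1/(d^\lambda_{22}d^\lambda_t)-1/d^\lambda_{11})$ part of $\nu_2$, checks out against the stated coefficients.
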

\begin{proof}
	See Appendix B.
\end{proof}
Thus, for the typical case that $d_{12}d_{t} >> d_{21}$ and $d_{22}d_{t} >> d_{11}$, one can show that $\Etag$ increases inversely with~$\rho$.

From Propositions~\ref{statiC_average_energy} and~\ref{fading_average_energy}, as the sequence length $N \rightarrow \infty$, the asymptotic expected ETR for the static channel and Rayleigh fading channel are given by
\begin{equation}
\begin{aligned}
	\Etag &\approx \frac{\eta \myP T }{N} \left((1-\rho) \vert f_{11} \vert^2 + \frac{1}{2} \left(\vert f_{21} +\sqrt{\rho} f_{22} g_{21} \vert^2 + \vert f_{21} -\sqrt{\rho} f_{22} g_{21} \vert^2\right) \right), \\
\Etag &\approx \frac{\eta \myP T}{N} \left( (1-\rho )\frac{1}{d^\lambda_{11}} +\frac{1}{d^\lambda_{21}} + \rho \frac{1}{d^\lambda_{22}d^\lambda_{t}}  \right),\\
\end{aligned}
\end{equation}		
respectively.

Therefore, for the FCP scheme, the expected ETR decreases with the sequence length $N$ and approaches zero,
while for the FCE scheme, the expected ETR converges to a constant with the increasing of sequence length.

\subsection{Energy-Outage Probability}
Based on the analysis in Sec.~\ref{Sec_IT_forward} and Sec.~\ref{Sec_ET_ERT}, the energy-outage probability at Tag~1 is 
\begin{equation} \label{first_Pout}
\begin{aligned}
&\Pout 
= 
p_0 \ \myprobability{ \frac{T}{N}  \left(\Prxa + \Peh\right) < \mathcal{E}_0 }
+ \frac{p_1}{4} \left( \myprobability{ \frac{T}{N}  \Prxb< \mathcal{E}_0} \right.\\
& \left. +  \myprobability{\frac{T}{N} \left(\Prxa + \Prxc\right)< \mathcal{E}_0} + \myprobability{ \frac{T}{N}  \left(\Prxd + \Peh\right)< \mathcal{E}_0} 
+  \myprobability{ \frac{T}{N}  \left(\Prxa + \Peh\right) < \mathcal{E}_0 }
\right)  \\
& + \frac{p_2 }{2}\left( \myprobability{\frac{T}{N}  \Prxb< \mathcal{E}_0} + \myprobability{\frac{T}{N}  \left(\Prxd + \Prxc\right)< \mathcal{E}_0}\right). 
\end{aligned}
\end{equation}

For the static channel, the energy-outage probability can be easily derived using \eqref{first_Pout} and is omitted here due to space limitations. 
The result for Rayleigh fading channel is presented in the following proposition.
\begin{proposition} \label{fading_outage}
	For Rayleigh fading channel, the energy-outage probability is
\begin{equation} 
\begin{aligned}
&\Pout
 = \frac{(N-2)^2}{N(N-1)} M\left(
\left(1-\rho \right)\frac{1}{d^\lambda_{11}},
0,
\frac{1}{d^\lambda_{21}},
\rho \frac{1}{d^\lambda_{22}} \frac{1}{d^\lambda_{t}} 
\right) +\! \frac{N\!-\!2}{N(N\!-\!1)} \times \\
&
\left(\!\!M\!\left(\!
(\!1\!-\!\rho)\frac{1}{d^\lambda_{11}},
0,
(\!1\!-\!\rho) \frac{1}{d^\lambda_{21}},
(\!1\!-\!\rho) \rho \frac{1}{d^\lambda_{22}}\frac{1}{d^\lambda_{t}}
\right)
\!+\! M\!\left(\!
(\!1\!-\!\rho)\frac{1}{d^\lambda_{11}},
(\!1\!-\!\rho) \rho \frac{1}{d^\lambda_{12}}\frac{1}{d^\lambda_{t}},
\frac{1}{d^\lambda_{21}},
\rho \frac{1}{d^\lambda_{22}}\frac{1}{d^\lambda_{t}}
\!\right)\!\!\right)\\
&+ \frac{1}{N(N-1)} M\left((1-\rho)\frac{1}{d^\lambda_{11}},
(1-\rho) \rho \frac{1}{d^\lambda_{12}}\frac{1}{d^\lambda_{t}},
(1-\rho) \frac{1}{d^\lambda_{21}},
(1-\rho) \rho \frac{1}{d^\lambda_{22}}\frac{1}{d^\lambda_{t}}
\right)\\
&+ \frac{1}{N} \tilde{M}\left(
(1-\rho)\left(\frac{1}{d^\lambda_{11}}+\frac{1}{d^\lambda_{21}}\right),
(1-\rho)
\rho \left(\frac{1}{d^\lambda_{12}}\frac{1}{d^\lambda_{t}} + \frac{1}{d^\lambda_{22}} \frac{1}{d^\lambda_{t}}\right)
\right),
\end{aligned}
\end{equation}
\end{proposition}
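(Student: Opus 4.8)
The plan is to evaluate \eqref{first_Pout} term by term under Rayleigh fading and then collect the prefactors using \eqref{overlap_prob}. The first step is to install the fading statistics: each forward coefficient gives $\vert f_{mn} \vert^2 \sim \mathrm{Exp}(d_{mn}^\lambda)$ (exponential with mean $1/d_{mn}^\lambda$), the tag--tag coefficient gives $\vert g_{21} \vert^2 \sim \mathrm{Exp}(d_t^\lambda)$, and all are mutually independent. Because the complex Gaussians are circularly symmetric, multiplying by the BPSK symbol $q_2 \in \{1,-1\}$ leaves every distribution unchanged, so $q_2$ can be dropped throughout (this is what removes the $q_2$-dependence seen in the static case of Proposition~\ref{first_ave_high_snr_ber}). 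The crucial structural observation is that, conditioned on $g_{21}$, the composite term $f_{21} + f_{22}\sqrt{\rho} g_{21} q_2$ is a zero-mean complex Gaussian built from the independent contributions of $f_{21}$ and $f_{22}$, so $\vert f_{21} + f_{22}\sqrt{\rho} g_{21} q_2\vert^2$ is exponential with conditional mean $\frac{1}{d_{21}^\lambda} + \frac{\rho\vert g_{21}\vert^2}{d_{22}^\lambda}$; likewise $\vert f_{11}+f_{12}\sqrt{\rho}g_{21}q_2\vert^2$ is conditionally exponential with mean $\frac{1}{d_{11}^\lambda}+\frac{\rho\vert g_{21}\vert^2}{d_{12}^\lambda}$, and $\vert f_{11}\vert^2$ is unconditionally exponential. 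Given $g_{21}$, the two quadratic forms entering any single probability in \eqref{first_Pout} depend on \emph{disjoint} subsets of $\{f_{11},f_{12},f_{21},f_{22}\}$ and are therefore conditionally independent.

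Next I would package the resulting integrals into two auxiliary functions. Writing the normalized gain $v = d_t^\lambda \vert g_{21}\vert^2 \sim \mathrm{Exp}(1)$ and normalizing the threshold to $\theta = N\mathcal{E}_0/(T\eta\myP)$ after scaling by $T/N$, every summand $\Prx{n}{}$, $\Prxnew{n}{}$ or $\Peh$ reduces to an exponential whose conditional mean is affine in $v$, of the form $a + b v$. I would then define $\tilde M(a,b)$ as the probability that one such exponential is below $\theta$ with the average over $v$ already carried out, i.e.\ $\tilde M(a,b)=\int_0^\infty \bigl(1-e^{-\theta/(a+bv)}\bigr)e^{-v}\,\mathrm{d}v$, and $M(a_1,b_1,a_2,b_2)$ as the analogous probability for the sum of two conditionally independent such exponentials. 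The evaluation of these closed forms is the analytic core: for $M$ one first uses the standard cdf of a sum of two independent exponentials with distinct conditional rates, then integrates the two exponential-of-reciprocal terms over $v$, producing the exponential-integral / incomplete-gamma structure $\Gamma(\cdot,\cdot)$ that already surfaces in Proposition~\ref{Tag_BER_fading}.

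With $M$ and $\tilde M$ in hand, the final step is bookkeeping. I would match each distinct probability in \eqref{first_Pout} to the correct argument pattern, reading off the constant part $a$ and the slope $b$ of each affine conditional mean, taking care that $\Peh$ lacks the $(1-\rho)$ factor carried by $\Prxc$ so that otherwise similar terms differ in both slots. Then I sum the prefactors: the event $\{\frac{T}{N}(\Prxa+\Peh)<\mathcal{E}_0\}$ carries weight $p_0$ from the non-overlapping term plus $p_1/4$ from the single-overlap group, and $p_0+p_1/4=(N-2)^2/[N(N-1)]$, which gives the first $M$-coefficient; the event $\{\frac{T}{N}\Prxb<\mathcal{E}_0\}$ collects $p_1/4+p_2/2=1/N$, telescoping to the $\tilde M$-coefficient; and the events $\{\frac{T}{N}(\Prxa+\Prxc)<\mathcal{E}_0\}$, $\{\frac{T}{N}(\Prxd+\Peh)<\mathcal{E}_0\}$ and $\{\frac{T}{N}(\Prxd+\Prxc)<\mathcal{E}_0\}$ retain weights $p_1/4=(N-2)/[N(N-1)]$, $p_1/4$ and $p_2/2=1/[N(N-1)]$ respectively. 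Substituting the affine means into the slots reproduces exactly the four $M$-terms and one $\tilde M$-term claimed.

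I expect the main obstacle to be the closed-form evaluation of $M$ and $\tilde M$ rather than any probabilistic subtlety: the coupling of the two quadratic forms through the shared $g_{21}$ forces the conditional-independence argument, but once one conditions on $v$ the only genuine difficulty is performing the $v$-integration of the reciprocal-mean exponentials and correctly handling the degenerate slots, such as the $b=0$ case appearing in the first term where $\vert f_{11}\vert^2$ carries no $g_{21}$ dependence. Tracking the $(1-\rho)$ factors through the constant and slope arguments is where coefficient errors are most likely to arise, so I would verify each argument tuple against the two-link power expressions in Sec.~\ref{Sec_IT_forward} and Sec.~\ref{Sec_ET_ERT} before assembling the sum.
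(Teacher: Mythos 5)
Your proposal matches the paper's Appendix C argument essentially step for step: condition on $g_{21}$ (the paper also conditions on $q_2$, which your circular-symmetry remark disposes of), observe that each receive-power term is conditionally exponential with a mean affine in $\vert g_{21}\vert^2$ and that the pairs entering each probability are conditionally independent, apply the cdf of a sum of two independent exponentials with distinct means, average over $g_{21}$ to get $M$ and $\tilde M$, and assemble the prefactors from \eqref{first_Pout} and \eqref{overlap_prob}. The argument tuples and the coefficient arithmetic ($p_0+p_1/4=(N-2)^2/[N(N-1)]$, $p_1/4+p_2/2=1/N$, etc.) all check out, so the proposal is correct and not a genuinely different proof.
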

\noindent \emph{where }
\begin{equation} \label{outage_prob_2}
M(a,b,c,d)\!\triangleq 
\!1 \!-\! \int_{0}^{\infty} 
\frac{(a\!+\!bx) \exp\left(-\frac{\Xi}{(a\!+\!bx)} \!-\! x \right)\!-\!
	(c\!+\!dx) \exp\left(-\frac{\Xi}{(c\!+\!dx)} \!-\! x \right)
}
{a-c+(b-d)x}
\mathrm{d}x,\ \Xi= \frac{N \mathcal{E}_0}{\eta \myP T}
\end{equation}
\emph{and} 
\begin{equation} \label{M_new}
\tilde{M}\left(a,b\right) 
\triangleq 
1- \int_{0}^{\infty} \exp\left(-\frac{\Xi}{a+bx} - x\right) \mathrm{d} x.
\end{equation}

\begin{proof}
	See Appendix C.
\end{proof}
For the typical case that $d_{12}d_{t} >> d_{11}$ and $d_{22}d_{t} >> d_{21}$, i.e., 
each of the terms $M(\cdot)$ in \eqref{fading_outage} is approximated by $M\left(
\left(1-\rho \right)\frac{1}{d^\lambda_{11}},
0,
\frac{1}{d^\lambda_{21}},
0
\right)$ and the term $\tilde{M}(\cdot)$ is approximated by $\tilde{M}\left((1-\rho)\left(\frac{1}{d^\lambda_{11}}+\frac{1}{d^\lambda_{21}}\right),
0\right)$. Since both functions $M(\cdot,\cdot,\cdot,\cdot)$ and $\tilde{M}(\cdot,\cdot)$ decrease with each of the parameters, $\Pout$ increases with~$\rho$.

Based on \eqref{first_Pout}, as the sequence length $N \rightarrow \infty$, the asymptotic $\Pout$ is given by
\begin{equation}
\Pout \approx p_0 \ \myprobability{ \frac{T}{N}  \left(\Prxa + \Peh\right) < \mathcal{E}_0}
\approx
\myprobability{(1\!-\!\rho)\vert f_{11} \vert^2 \!+\! \vert f_{21} \!+\! \sqrt{\rho} f_{22}g_{21}q_2 \vert^2 < \Xi}.
\end{equation}
For the FCP scheme, as the sequence length $N \rightarrow \infty$, $\Xi \rightarrow \infty$ makes $\Pout \rightarrow 1$.
While for the FCE scheme, the asymptotic energy-outage probability for Rayleigh fading channel is given by
\begin{equation}
\begin{aligned}
\Pout & =  M\left(
\left(1-\rho \right)\frac{1}{d^\lambda_{11}},
0,
\frac{1}{d^\lambda_{21}},
\rho \frac{1}{d^\lambda_{22}} \frac{1}{d^\lambda_{t}} 
\right).
\end{aligned}
\end{equation}

\section{Performance of Time-Hopping Full-Duplex BackCom with Asynchronous Transmissions} \label{Sec_Asyn}
Considering the fact that the chip-synchronism is difficult to achieve in practical situations, 
in this section, we study BackCom with chip asynchronous transmissions. 
Without loss of generality, it assumes that $\tau$ is the delay shift between Links~1 and~2, which is positive and given by
\begin{equation}
\tau = \beta \frac{T}{N},\ \beta \in [0,1),
\end{equation}
where $\beta$ is named as the delay offset. Hence, the delay is assumed to be within a chip duration.

Due to the lack of perfect synchronization, the pattern-overlapping scenarios are more complex than that of the chip-synchronous case (Sec.~\ref{Sec_Design_modulation}). 
Considering that Link~1's TH-SS pattern $\mathcal{S}_1$ may consists of disjunct chips or consecutive chips illustrated in Figs.~\ref{fig:disjunct} and~\ref{fig:connected}, respectively.
Note that if the pattern $\mathcal{S}_1$ consist of the first chip and the last chip of a symbol, we say this pattern consists of consecutive chips.
Using $\Pdis{}$ and $\Pcon{}$ to denote the probability that Link~1's chips is (d)isjunct or (c)onsecutive, respectively, we have 
\begin{equation}
\Pdis{} = 1 -  \frac{2}{N-1} = \frac{N-3}{N-1},\  \Pcon{} = \frac{2}{N-1}.
\end{equation}
Then all the pattern overlapping scenarios are illustrated in Fig.~\ref{fig:asyn}, and $p^{a}_{n-i}$ denotes the probability of each sub-scenarios, where $a \in \{\mathrm{d},\mathrm{c}\}$, $n=0,1,2$ denotes the number of Link~2's chips overlapped by Link~1's chips, and $i=1,2,3,4$ denotes the pattern overlapping scenario index in Fig.~\ref{fig:asyn}. 
Note that the chip overlapping duration is $1-\beta$ of a chip in the single-chip overlapping scenario 1 of Fig.~\ref{fig:disjunct}, while it is $\beta$ in the single-chip overlapping scenario 2.
Thus, $p^{\mathrm{d}}_{n-i}$ and $p^{\mathrm{c}}_{n-i}$ can be obtained~as
\begin{equation} \label{prob_asyn}
\begin{aligned}
&\Pdis{0} = \Pdis{} \frac{(N-4)(N-5)}{N(N-1)},\ 
\Pdis{1-1} =\Pdis{1-2}= \Pdis{} \frac{4 N -16}{N(N-1)},
\Pdis{2-1} = \Pdis{2-2}= \Pdis{} \frac{2}{N(N-1)},\\ 
&\Pdis{2-3} =\Pdis{2-4} = \Pdis{} \frac{4}{N(N-1)},\
\Pcon{0} \!=\! \Pcon{} \frac{(N\!-\!3)(N\!-4)}{N(N\!-\!1)},\ 
\Pcon{1-1} \!=\! \Pcon{1-2}\!=\!\Pcon{1-3}\!=\! \Pcon{} \frac{2 N \!-\!6}{N(N\!-\!1)},\\
&\Pcon{2-1} \!=\!\Pcon{2-2}\!=\!\Pcon{2-3}\!=\! \Pcon{} \frac{2}{N(N-1)}.
\end{aligned}
\end{equation}

\begin{figure}[t]
	\small
	\renewcommand{\captionlabeldelim}{ }
	\renewcommand{\captionfont}{\small} \renewcommand{\captionlabelfont}{\small}		
	\centering     
	\vspace{-1cm}
	\subfigure[Link~1 uses disjunct chips.]{\label{fig:disjunct}
	\includegraphics[scale=0.4]{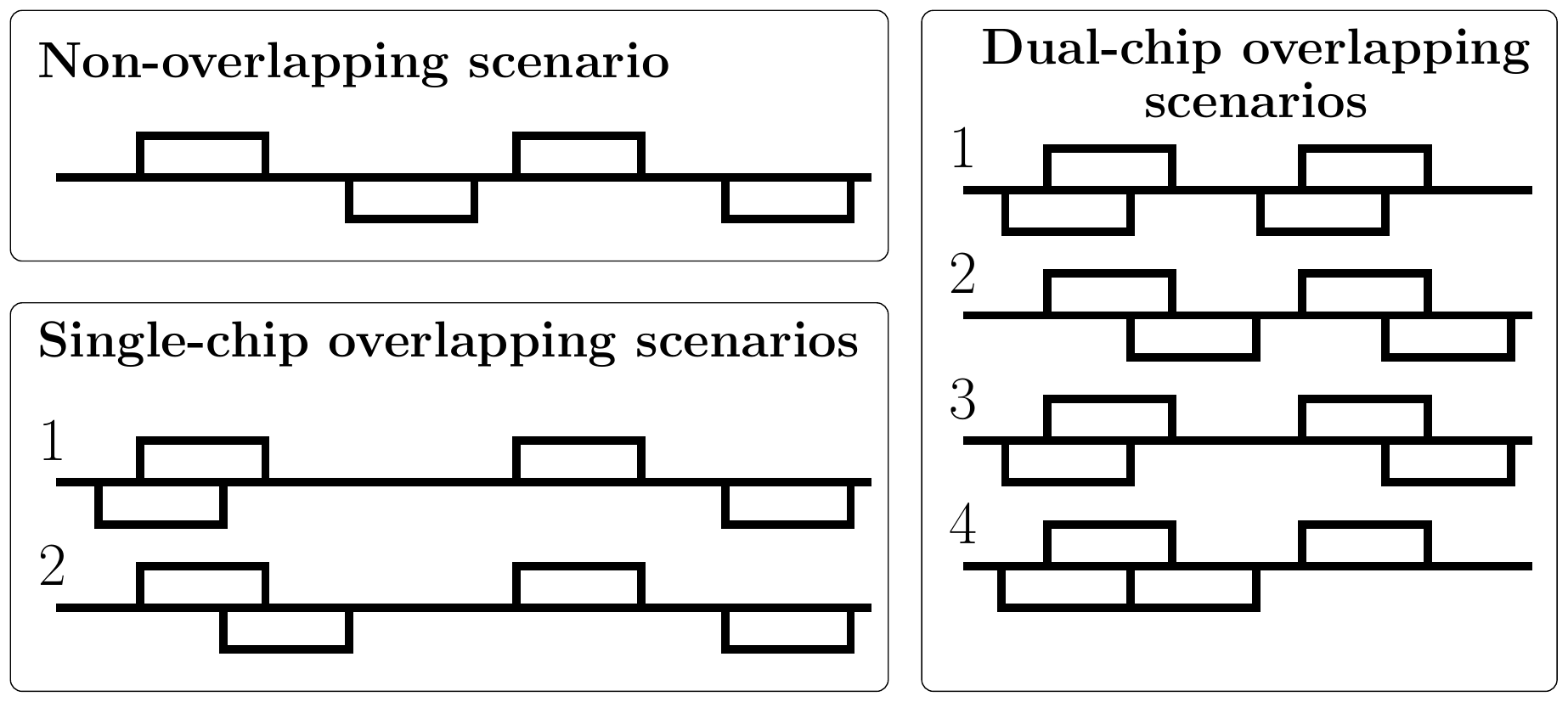}	
		}
	\subfigure[Link~1 uses consecutive chips.]{\label{fig:connected}
	\includegraphics[scale=0.4]{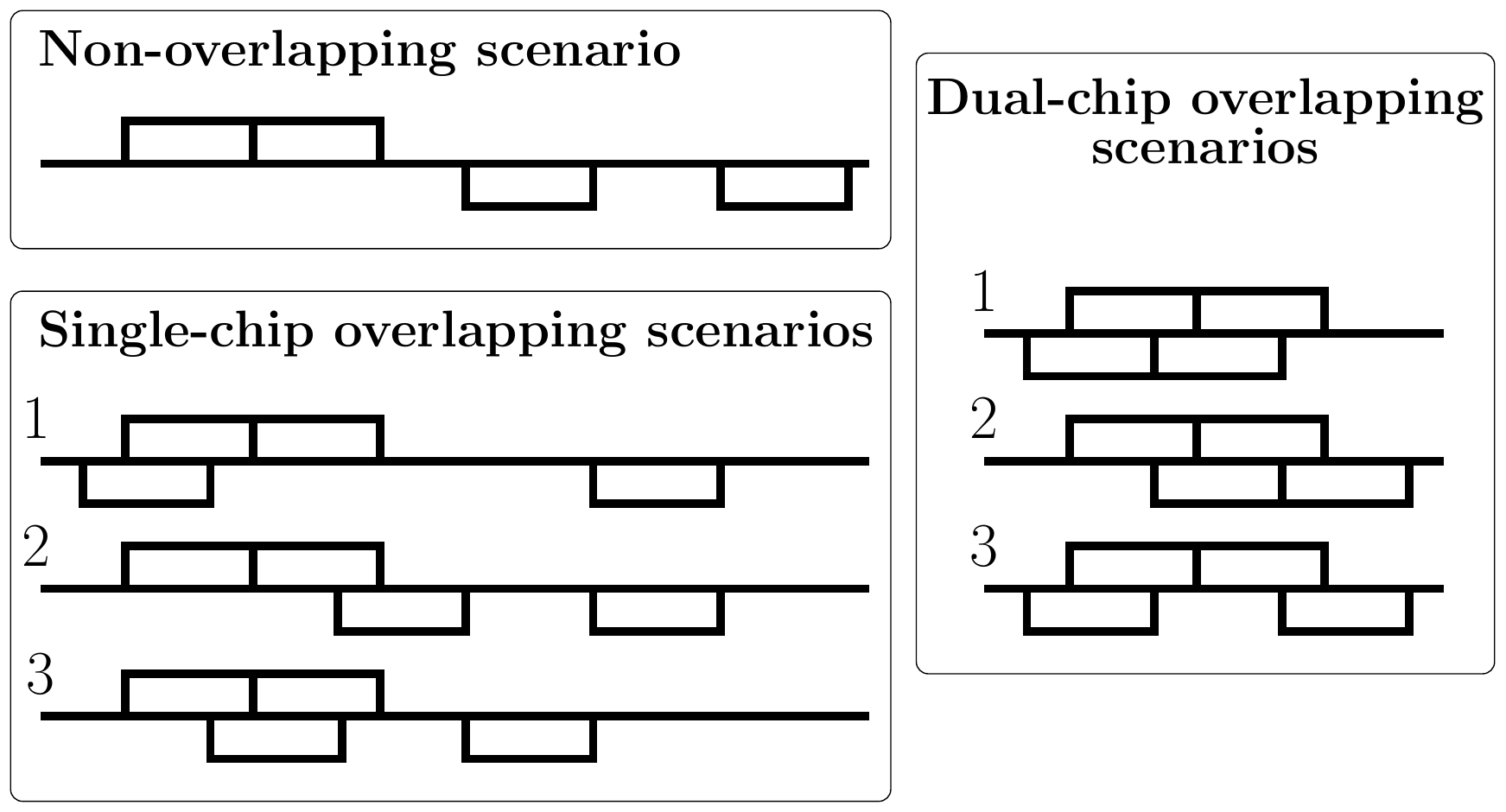}	
		}
	\vspace{-0.3cm}
	\caption{\small Different pattern-overlapping scenarios in the chip-asynchonous scenario. The pairs of on-chips of Links~1 and~2 are the rectangulars above and under the lines, respectively.}
	\vspace{-0.7cm}
	\label{fig:asyn}
\end{figure}

\subsection{BER at the Reader for Backward IT}
We assume that Reader~1 suffers interference when Link~1's transmitted on-chip $C_1$ is partially or entirely overlapped by either one or two of Link~2's on-chips $s_{20}$ and $s_{21}$.
Thus, the BER of coherent detection can be close to the maximum of $0.5$ when the interference occurs (see Sec.~\ref{Sec_IT_backward}).
Calculating the probability that Reader~1 suffers interference and following the similar steps for \eqref{slow_fading_reader_ber_0}, the expected BER for the backward IT is 
\begin{equation} \label{async_reader_ber}
\begin{aligned}
&\Preaderasyn
= \Pbpsk \left( \Pdis{0} +  \Pcon{0} + \frac{1}{2} \left(\Pdis{1-1}+\Pdis{1-2} + \Pdis{2-4} +\Pcon{1-1}+\Pcon{1-2}\right) \right)\\
&+ \!0.5 \times \left(\frac{1}{2} \left(\Pdis{1-1} \!+\!\Pdis{1-2} \!+\! \Pdis{2-4} \!+\!\Pcon{1-1}\!+\!\Pcon{1-2}\right) \!+\! \Pdis{2-1} \!+\!\Pdis{2-2} \!+\!\Pdis{2-3}\!+\!\Pcon{1-3} \!+\!\Pcon{2-1}\!+\!\Pcon{2-2}\!+\!\Pcon{2-3}\right).
\end{aligned}
\end{equation}
Substituting \eqref{prob_asyn} into \eqref{async_reader_ber} gives 
the expected BER for the backward IT as
\begin{equation} \label{Asyn_Reader_BER}
\Preaderasyn= \Pbpsk \frac{(N-3) (N-2)}{ N (N-1)}
+
\frac{2 N - 3}{N(N-1)}.
\end{equation}

\textit{\underline{Insights}:} We can make the following observations using \eqref{Asyn_Reader_BER}:
(i) For the high SNR regime, the BER reduces to $\Preaderasyn \approx \frac{2}{N}$. 
Comparing with the chip-synchronous case, we have 
$
{\Preaderasyn}/{\Preader} \approx 2,
$
which is the \emph{BER deterioration rate} due to the chip asynchronization.
(ii) For the static channel, 
as the sequence length $N\rightarrow \infty$, the asymptotic BER for the FCP scheme is given by
$ 
\Preaderasyn \approx \frac{(N-2)(N-3)}{ N(N-1)} \Pbpsk,
$
thus, comparing with the chip-synchronous case, we have ${\Preaderasyn}/{\Preader} \approx 1$.
While the asymptotic BER for the FCE scheme is given by
$ 
\Preader \approx \frac{2}{N},
$
thus, ${\Preaderasyn}/{\Preader} \approx 2$.
(iii) For Rayleigh fading channel,
as the sequence length $N\rightarrow \infty$, the asymptotic BER for both the FCP and FCE schemes are given by
$ 
\Preaderasyn \approx \frac{(N-2)(N-3)}{ N(N-1)}  \Pbpsk,
$
thus, comparing with the chip-synchronous case, ${\Preaderasyn}/{\Preader} \approx 1$.
\emph{From the above observations, we see that the BER deterioration rate for the backward IT is either $1$ (i.e., no deterioration) or $2$, which is not too significant.}
In the next subsection, we show that this is also comparable to the BER deterioration rate for the forward IT.

\subsection{BER at the Tag for Forward IT}
For tractability, we focus on the large sequence length scenario.
Based on \eqref{prob_asyn}, as the sequence length $N \rightarrow \infty$, 
the dominant terms corresponding to the pattern overlapping scenarios are 
$\Pdis{1-1}$ and $\Pdis{1-2}$, thus, the asymptotic BER for the forward IT is 
\begin{equation} \label{asyn_BER}
\Ptagasyn \approx \Pdis{1-1} \myexpect{}{\Ptagi{1-1}} + \Pdis{1-2} \myexpect{}{\Ptagi{1-2}}
\approx \frac{4}{N} \myexpect{}{\Ptagi{1-1} + \Ptagi{1-2}}.
\end{equation}

Given disjunct pair of on-chips for Link~1, for the single-chip overlapping scenario 1 (see Fig.~\ref{fig:asyn}), we have the following transmission cases which occur with the same probability.

		\underline{Case 1}: The two readers are using the overlapping chip for transmission.
		The harvested energy in the chips $C_1$, $\mathcal{S}_1 \backslash C_1$, the off-chips and the BER for the forward IT~are 
		\begin{equation}
			\begin{aligned}
					\Erx{1-1}{(C_1=C_2=\mathcal{S}_1 \cap \mathcal{S}_2)} &=\frac{T}{N} \left((1-\beta) \Prxb  + \beta \Prxa\right),\ \Erxnew{1-1}{(C_1=C_2=\mathcal{S}_1 \cap \mathcal{S}_2)} = 0,\\
					\EEH{1-1}{(C_1=C_2=\mathcal{S}_1 \cap \mathcal{S}_2)} &= \frac{T}{N}  \beta \Peh,\ \Ptagi{1-1}{(C_1=C_2=\mathcal{S}_1 \cap \mathcal{S}_2)} = 0.
			\end{aligned}
		\end{equation}
		
		\underline{Case 2}: Reader~1 is using the non-overlapping chip, while Reader~2 is using the overlapping chip for transmission.
		\begin{equation}
		\begin{aligned}
		\Erx{1-1}{(C_1\neq C_2=\mathcal{S}_1 \cap \mathcal{S}_2)} &=\frac{T}{N} \Prxa,\ \Erxnew{1-1}{(C_1\neq C_2=\mathcal{S}_1 \cap \mathcal{S}_2)} = \frac{T}{N}(1-\beta)\Prxc,\\
		\EEH{1-1}{(C_1\neq C_2=\mathcal{S}_1 \cap \mathcal{S}_2)} &= \frac{T}{N}  \beta \Peh,\ \Ptagi{1-1}{(C_1\neq C_2=\mathcal{S}_1 \cap \mathcal{S}_2)} =\myprobability{\Prxa < (1-\beta) \Prxc}.
		\end{aligned}
		\end{equation}

		\underline{Case 3}: Reader~1 is using the overlapping chip, while Reader~2 is using the non-overlapping chip for transmission.
		\begin{equation}
		\begin{aligned}
		\Erx{1-1}{(C_2\neq C_1=\mathcal{S}_1 \cap \mathcal{S}_2)} &=\frac{T}{N} \left( \beta \Prxa  + (1-\beta)\Prxd\right),\ \Erxnew{1-1}{(C_2\neq C_1=\mathcal{S}_1 \cap \mathcal{S}_2)} = 0,\\
		\EEH{1-1}{(C_2\neq C_1=\mathcal{S}_1 \cap \mathcal{S}_2)} &= \frac{T}{N}  \Peh,\ \Ptagi{1-1}{(C_2\neq C_1=\mathcal{S}_1 \cap \mathcal{S}_2)} =0.
		\end{aligned}
		\end{equation}
	
		\underline{Case 4}: Reader~1 and Reader~2 are using non-overlapping chips for transmission.
		\begin{equation}
		\begin{aligned}
		\Erx{1-1}{(C_1\neq \mathcal{S}_1 \cap \mathcal{S}_2, C_2\neq \mathcal{S}_1 \cap \mathcal{S}_2)} &=\frac{T}{N} \Prxa,\ \Erxnew{1-1}{(C_1\neq \mathcal{S}_1 \cap \mathcal{S}_2, C_2\neq \mathcal{S}_1 \cap \mathcal{S}_2)} = 0,\\
		\EEH{1-1}{(C_1\neq \mathcal{S}_1 \cap \mathcal{S}_2, C_2\neq \mathcal{S}_1 \cap \mathcal{S}_2)} &= \frac{T}{N}  \Peh,\ \Ptagi{1-1}{(C_1\neq \mathcal{S}_1 \cap \mathcal{S}_2, C_2\neq \mathcal{S}_1 \cap \mathcal{S}_2)} =0.
		\end{aligned}
		\end{equation}

Since the only difference between the single-chip overlapping scenarios~1 and~2 is the delay offset, which means by replacing $\beta$ with $1-\beta$ in the above analysis, the relevant results for the single-chip overlapping scenario~2 can be obtained.
Thus, based on \eqref{asyn_BER}, 
as the sequence length $N\rightarrow \infty$, the asymptotic BER for the forward IT is 
\begin{equation} \label{asyn_tag_ber_gen}
\Ptagasyn = \frac{1}{N} 
\left(\myprobability{\Prxa < (1-\beta) \Prxc} + \myprobability{\Prxa < \beta \Prxc}\right).
\end{equation}

For the static channel, without loss of generality, assuming that $\beta\in (0,1/2)$,
if the random variable $\Prxa/\Prxc \in [0,\beta)$ with probability $1$, $\Ptagasyn = \frac{2}{N} > \Ptag = \frac{1}{N}$,
whereas if the random variable $\Prxa/\Prxc \in (1-\beta,1)$ with probability $1$, $\Ptagasyn = o(\frac{1}{N}) < \Ptag = \frac{1}{N}$.
Thus, for some cases, the chip asynchronization deteriorates BER, but not for other cases.

For Rayleigh fading channel, based on \eqref{asyn_tag_ber_gen}, we further have
\begin{equation} \label{asyn_approx_ber}
\begin{aligned}
\Ptagasyn = \frac{1}{N} 
\left(
2- \int_{0}^{\infty} \left(
\frac{e^{-x}}{1+ \beta d^{\lambda}_{11} \left(\frac{1}{d^{\lambda}_{21}} + \frac{\rho x}{d^{\lambda}_{21} d^{\lambda}_{t}} \right)}
+
\frac{e^{-x}}{1+ (1-\beta) d^{\lambda}_{11} \left(\frac{1}{d^{\lambda}_{21}} + \frac{\rho x}{d^{\lambda}_{21} d^{\lambda}_{t}} \right)}
\right) \mathrm{d}x
\right),
\end{aligned}
\end{equation}
and $\left.{\Ptagasyn}\right\vert_{\beta = 0} \!=\!\left.{\Ptagasyn}\right\vert_{\beta = 1} $, 
$\left.\frac{\mathrm{d}\Ptagasyn}{\mathrm{d} \beta}\right\vert_{\beta = 0}\!>\!0$, $\left.\frac{\mathrm{d}\Ptagasyn}{\mathrm{d} \beta}\right\vert_{\beta = 1}\!< \!0$,  $\left.\frac{\mathrm{d}\Ptagasyn}{\mathrm{d} \beta}\right\vert_{\beta = \frac{1}{2}}\!= \!0$  and $\frac{\mathrm{d}^2\Ptagasyn}{\mathrm{d}^2 \beta}\!<\!0$.
Thus, $\Ptagasyn$ is a concave function of $\beta$, which increases first and then decreases.
Therefore, any chip asynchronization deteriorates the BER, and the worst case of BER is obtained when $\beta = \frac{1}{2}$ as
\begin{equation} \label{asyn_tag_ber}
\Ptagasyn = \frac{2}{N} 
\left(
1- \int_{0}^{\infty} \left(
\frac{e^{-x}}{1+ \frac{1}{2} d^{\lambda}_{11} \left(\frac{1}{d^{\lambda}_{21}} + \frac{\rho x}{d^{\lambda}_{21} d^{\lambda}_{t}} \right)}	
\right) \mathrm{d}x
\right),
\end{equation}
which can be proved to be greater than $1/N$ but less than $2/N$.

\textit{\underline{Insight}:} {Therefore, the BER deterioration rate for the forward IT due to the chip asynchronization, $\Ptagasyn/\Ptag \in \left[1,2\right]$, when the TH-SS sequence length is sufficiently large.}

\subsection{Performance of Forward ET}
\subsubsection{Expected ETR}
Based on the analysis above for the single chip overlapping scenario~1 given Link 1's disjunct pair of on-chips, we see that the expected ETR in this scenario~is 
\begin{equation} \label{asyn_energy_1}
\myexpect{}{\Erx{1-1}{} + \Erxnew{1-1}{} + \EEH{1-1}{}}
= \frac{T}{4 N} \left((2+2 \beta) \left(\Prxa + \Peh\right) + (1-\beta) \left(\Prxb + \Prxc + \Prxd\right) \right).
\end{equation}
Similarly, for the single chip overlapping scenario~2, we have
\begin{equation} \label{asyn_energy_2}
\myexpect{}{\Erx{1-2}{} + \Erxnew{1-2}{} + \EEH{1-2}{}}
= \frac{T}{4 N} \left((2+2 (1-\beta)) \left(\Prxa + \Peh\right) + \beta \left(\Prxb + \Prxc + \Prxd\right) \right),
\end{equation}
and thus, 
the expected ETR in the single-chip overlapping scenario given given Link 1's disjunct pair of on-chips, is the expectation of \eqref{asyn_energy_1} and \eqref{asyn_energy_2}, i.e., 
$
\frac{T}{4 N} \left(3 \left(\Prxa \!+ \Peh\right) \!+ \frac{1}{2} \left(\Prxb \!+ \Prxc \!+ \Prxd\right) \right),
$
which is independent of the delay offset $\beta$.
Similarly, for the other pattern overlapping scenarios, the expected ETR also do not rely on $\beta$, thus, we have
\begin{equation}\label{asyn_average_energy}
\Etagasyn = \Etag = \frac{T}{N} \myexpect{}{\left(
	\frac{(N-2)}{N} \Prxa
	+ \frac{1}{N} \left(\Prxb + \Prxc +\Prxd  \right) 
	+ \frac{(N-2)}{N} \Peh
	\right)}.
\end{equation}
Intuitively, the TH-SS scheme has averaged out the delay offset effect on the expected ETR. 
Thus, the chip asynchronization has zero effect on expected ETR.

\subsubsection{Energy-Outage Probability}
As the sequence length $N \rightarrow \infty$, for the FCP scheme, it is straightforward that the asymptotic energy-outage probability $\Poutasyn \rightarrow 1$.
While for the FCE scheme, focusing on the dominant term, the asymptotic energy-outage probability is 
\begin{equation}
\Poutasyn \approx \Pdis{0} \myprobability{\frac{T}{N} \left( \Prxa+\Peh\right) < \mathcal{E}_0}
\approx \myprobability{\frac{T}{N} \left( \Prxa+\Peh\right) < \mathcal{E}_0}.
\end{equation}
Thus, $\Poutasyn/\Pout \approx 1$.
In other words, when the sequence length is sufficiently large, the chip asynchronization effect on the energy-outage probability is negligible.

\textit{\underline{Insight}:} {Therefore, as the sequence length $N \rightarrow \infty$, the chip asynchronization has negligible effect on the performance of the forward ET.}

\section{Performance of Time-Hopping Full-Duplex BackCom: $K$-Link Case} \label{Sec_K}
We study the $K$-link chip-synchronous transmissions in this section.
Assuming that $\varrho_n$, $n=0,1,2$, is the probability that Link~1 has $n$ chips overlapped by the other links,
i.e., $n = \vert \mathcal{S}_1 \cap (\mathcal{S}_2 \cup \mathcal{S}_3\cup \cdots \cup \mathcal{S}_K) \vert$, 
we can obtain 
\begin{equation} \label{K_prob}
\begin{aligned}
\varrho_0 &=\! p^{K-1}_0 = \left( \frac{(N-2)(N-3)}{N(N-1)} \right)^{K-1} = \mathcal{O}(1),\\
\varrho_1 &=\! 2 \left(\!\left(\!p_0 \!+\! \frac{1}{2} p_1\!\right)^{K-1} \!-\! p_0^{K-1}\!\right) \!=\! 2 \left(\!\left(\!\frac{N\!-\!2}{N}\!\right)^{K-1} \!-\! \left(\! \frac{(N\!-\!2)(N\!-\!3)}{N(N\!-\!1)} \!\right)^{K\!-\!1}\!\right) \!=\! \mathcal{O}(\frac{1}{N}),\\
\varrho_2 &=\! 1- \varrho_0 - \varrho_1 = 1 +  \left( \frac{(N-2)(N-3)}{N(N-1)} \right)^{K-1} - 2 \left(\frac{N-2}{N}\right)^{K-1} = o(\frac{1}{N}),
\end{aligned}
\end{equation}
where $\mathcal{O}(\cdot)$ and $o(\cdot)$ are the big O and little o notations, respectively.

\subsection{BER at the Reader for Backward IT}
Since Reader~1 suffers interference only if Link~1's transmitted on-chip $C_1 \in \mathcal{S}_2 \cup \mathcal{S}_3, \cdots \cup \mathcal{S}_K$,
following the similar steps for \eqref{slow_fading_reader_ber_0}, i.e., replacing $p_n$ with $\varrho_n$ in \eqref{slow_fading_reader_ber_0}, the expected BER for the backward IT is
\begin{equation}  \label{K_BER_reader}
\begin{aligned}
\Preader
& = \Pbpsk \left(\frac{N-2}{N}\right)^{K-1}
+ 
\frac{1}{2}\left(1-\left(\frac{N-2}{N}\right)^{K-1}\right).
\end{aligned}
\end{equation}

\textit{\underline{Insight}:} For the high SNR regime, the BER reduces to $\Preader \approx \frac{K-1}{N}$, and thus, \emph{$\Preader$ increases linearly with the number of BackCom Links, $K$.}
%
%
%

For the static channel, 
as the sequence length $N\rightarrow \infty$, the asymptotic BER for the FCP scheme is given by
$
\Preader \approx \left(\frac{N-2}{N}\right)^{K-1} \Pbpsk.
$
While the asymptotic BER for the FCE scheme is given by
$
\Preader \approx  \frac{K-1}{N},
$
which \emph{increases linearly with the number of BackCom Links, $K$.}
For Rayleigh fading channel,
as the sequence length $N\rightarrow \infty$, the asymptotic BER for both the FCP and FCE schemes are given by
$
\Preader \approx \left(\frac{N-2}{N}\right)^{K-1}  \Pbpsk
$,
i.e., \emph{increasing the number of the BackCom links has negligible effect on the BER for the backward IT.}

\subsection{BER at the Tag for Forward IT}
Based on Sec.~\ref{Sec_IT_forward}, Tag~1 suffers interference only in the single-chip or dual-chip pattern-overlapping scenarios. 
Assuming $\varrho$ is the probability that only one chip of the pattern $\mathcal{S}_1$ is overlapped and the overlapping is caused by just one link, we have 
\begin{equation}
\varrho = 2 \left( (K-1) \frac{p_1}{2} (p_0)^{K-2} \right) = (K-1) \frac{4(N-2)}{N(N-1)} \left(\frac{(N-2)(N-3)}{N(N-1)}\right)^{K-2} = \mathcal{O}(\frac{1}{N}),
\end{equation}
and it is easy to see that the sum probability of all the other single-chip overlapping and dual-chip overlapping scenarios is $\varrho_1+\varrho_2 - \varrho$, which approaches zero with a higher order of $\frac{1}{N}$.
Therefore, in the large sequence length regime, when analyzing the BER for the forward IT, we only consider the pre-mentioned dominant case.

Thus, following the similar steps for \eqref{first_ave_ber}, Tag~1's expected BER is 
\begin{equation} \label{K_Tag}
\begin{aligned}
\Ptag & \approx \varrho \frac{1}{K-1} \sum\limits_{k=2}^{K} \frac{1}{4} \myprobability{\Prx{1}{\left(C_1 \neq C_k = \mathcal{S}_1 \cap \mathcal{S}_k\right)} > \Prxa}\\
&= \frac{N-2}{N(N-1)} \left(\frac{(N-2)(N-3)}{N(N-1)}\right)^{K-2}  \sum\limits_{k=2}^{K}  \myprobability{\Prx{1}{\left(C_1 \neq C_k = \mathcal{S}_1 \cap \mathcal{S}_k\right)} > \Prxa},
\end{aligned}
\end{equation}
where $\Prx{1}{\left(C_1 \neq C_k = \mathcal{S}_1 \cap \mathcal{S}_k\right)} \triangleq  \etabs \myP \vert f_{k1} u + f_{kk}\sqrt{\rho} g_{k1} q_k \vert^2 $ which is the combined receive signal power of the signals R$k$-T1 and R$k$-T$k$-T1 during the chip $\mathcal{S}_1 \backslash C_1$.

As the sequence length $N\rightarrow \infty$, the asymptotic BER is
\begin{equation}
\Ptag \approx \frac{1}{N} \sum\limits_{k=2}^{K} \myprobability{\Prx{1}{\left(C_1 \neq C_k = \mathcal{S}_1 \cap \mathcal{S}_k\right)} > \Prxa}.
\end{equation}

\textit{\underline{Insight}:} {Therefore, the BER for the forward IT increases with the number of BackCom links~$K$.}

\subsection{Performance of Forward ET}
For simplicity, we focus on the large sequence length regime, and it is easy to see the probability that there is no overlapping between any of the $K$-link patterns, is 
\begin{equation}
\myprobability{\!\vert \cup^{K}_{k=1} \mathcal{S}_k \vert = 2 K\!}
\!=\!
\frac{\frac{N(N-1)}{2} \frac{(N-2)(N-3)}{2} \cdots \frac{(N-2(K-1))(N-2(K-1) -1)}{2} }{(\frac{N(N-1)}{2})^{K}}
\!=\! \mathcal{O}(1).
\end{equation}

Thus, as the sequence length $N \rightarrow \infty$, the asymptotic expected ETR and energy-outage probability are give by 
\begin{align} 
\Etag &\approx \frac{T}{N}   \myexpect{}{\Prxa + \sum\limits_{k=2}^{K} \PehK{k}} \\
\label{K_P_out}
\Pout &\approx 
\myprobability{\!\vert \cup^{K}_{k=1} \mathcal{S}_k \vert = 2 K\!}
\myprobability{\frac{T}{N}\left(\Prxa \!+\! \sum\limits_{k=2}^{K} \!\PehK{k}\right) \!<\! \mathcal{E}_0}
\!\approx\!
\myprobability{\frac{T}{N}\left(\Prxa \!+\! \sum\limits_{k=2}^{K} \!\PehK{k}\right) \!<\! \mathcal{E}_0},
\end{align}
respectively, 
where $\PehK{k} \triangleq \eta \myP \vert f_{k1}+f_{kk}g_{k1} \vert^2$ is the receive signal power of the signal R$k$-T1 in Tag~1's off-chips.
The expected ETR and the energy-outage probability monotonically increases and decreases with $K$, respectively. 

\textit{\underline{Insight}:} {Therefore, when the sequence length is sufficiently large, increasing the number of BackCom Links improves the performance of the forward ET.}

\section{Numerical Results}
In this section, due to space limitations, we only investigate the Rayleigh fading channel scenario (applicable to future IoT) and the FCP scheme.
In general, we focus on the $2$-link case, i.e., $K=2$. 
We set the path loss exponent as $\lambda = 2.5$, and the distance (in meters) between the devices as $d_{11} = 10$, $d_{22} = 10$, $d_{t} = 20$, $d_{12} = 22$, $d_{21} = 22$. 
For the $K$-link case, i.e., $K>2$, we assume that $d_{ii} = d_{11}$, and $d_{i1} = d_{21}$, $i=3,4,...,K$.
Also we set the noise variance at the reader and the tag as $\sigma^2_{\text{reader}} =\sigma^2_{\text{tag}} = -100$ dBm, and the RF energy harvesting efficiency as $\eta = 0.5$~\cite{Huang15} and~\cite{Bi15}.
Unless otherwise stated, we set the reader transmit power as $\myP = 50$ mW, the sequence length as $N=1000$, and the tag power consumption as $\frac{\mathcal{E}_0}{T} = 0.01$~$\mu$W.

{
In the following, we plot the BERs for the forward and backward IT (i.e., $\Preader$ and $\Ptag$), and the energy-outage probability (i.e., $\Pout$) for the forward ET based on the analytical results derived in Sections.~IV-VII. 
The Monte Carlo simulation results, averaged over $10^9$ random channel realizations, are also presented.
Specifically, the analytical results of BER for the forward and backward IT in Sec.~\ref{Sec_IT} are based on two modeling assumptions, respectively, i.e., the BER at the reader is $0.5$ as long as the reader suffers from interference, and 
the BER at the tag does not take into account the noise effect.
These modeling assumptions are verified by the simulation results.

\begin{figure*}[t]
	\renewcommand{\captionlabeldelim}{ }	
	\renewcommand{\captionfont}{\small} \renewcommand{\captionlabelfont}{\small}
	\minipage{0.45\textwidth}
	\centering
		\vspace*{-0.8cm}	
	\includegraphics[scale=0.46]{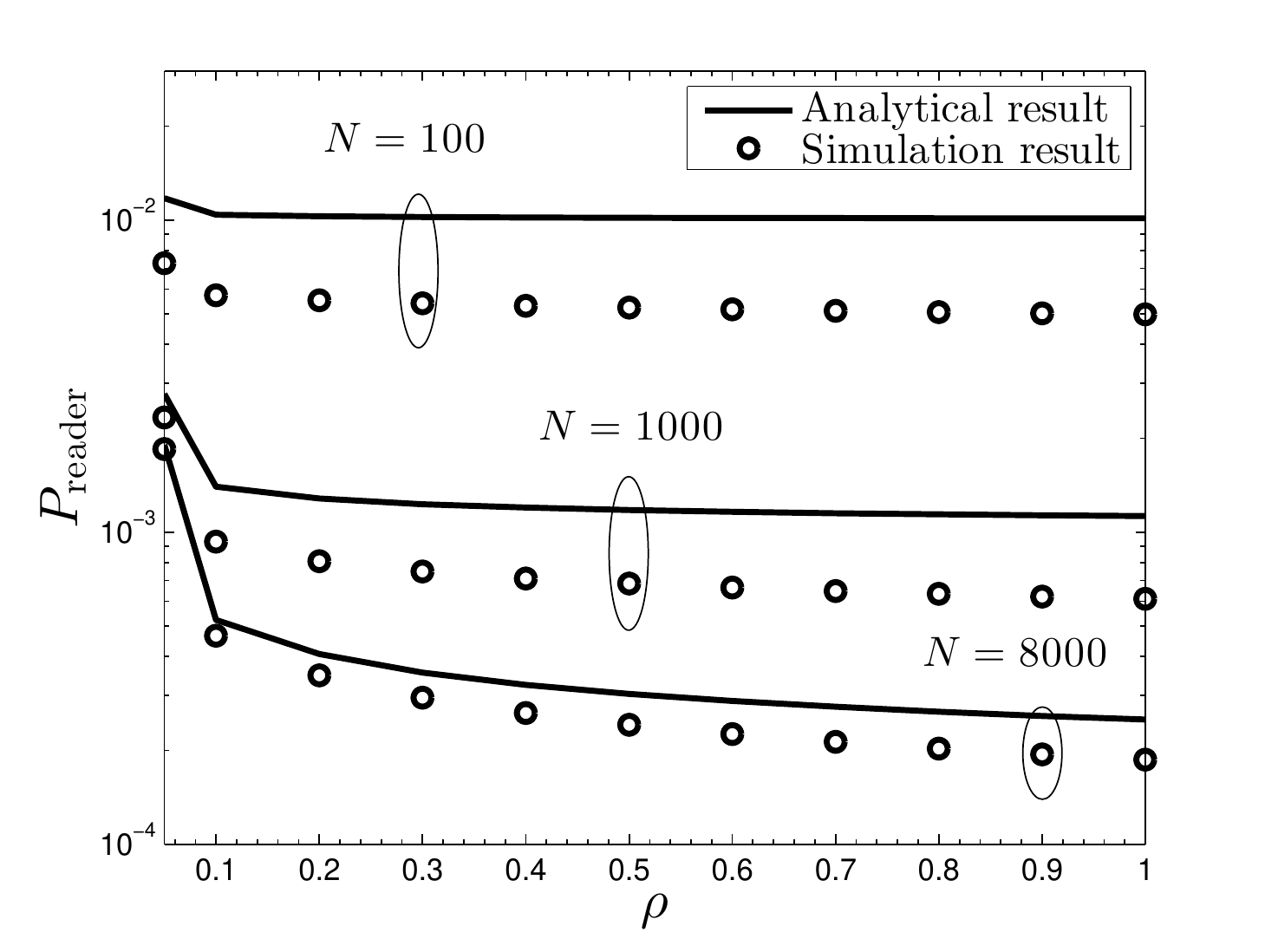}	
	\vspace*{-0.6cm}
	\caption{\small $\Preader$ versus $\rho$.}\label{fig:Preader}
	\endminipage
	\hspace{1cm}
	\minipage{0.45\textwidth}	
	\centering
		\vspace*{-0.8cm}
	\includegraphics[scale=0.46]{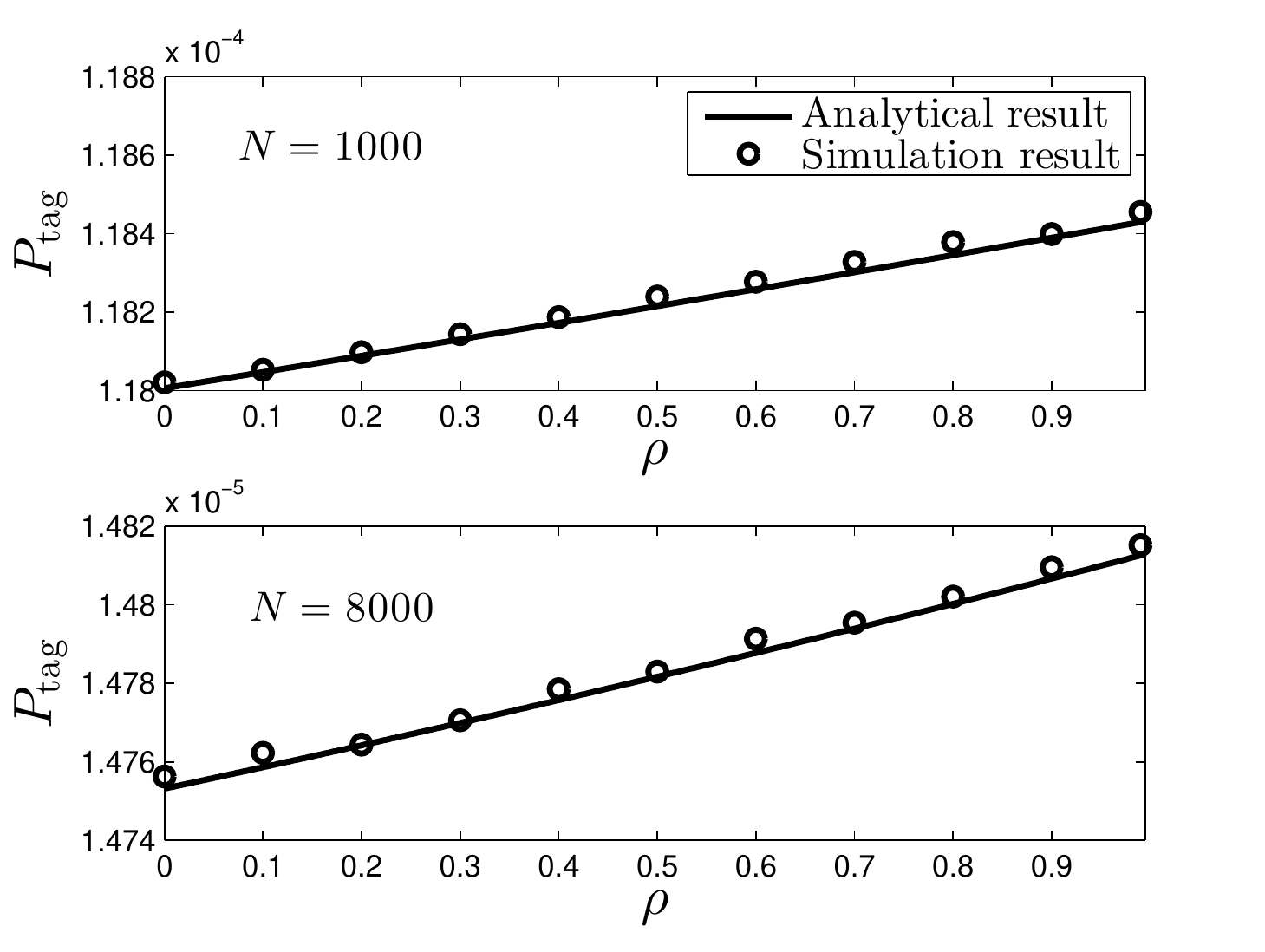}	
	\vspace*{-0.6cm}
	\caption{\small $\Ptag$ versus $\rho$.}\label{fig:Ptag}
	\endminipage
	\vspace*{-0.7cm}
\end{figure*}

\subsection{Two-Link BackCom}

In Fig.~\ref{fig:Preader}, curves of the BER for the backward IT, $\Preader$, are plotted for different reflection coefficient, $\rho$, and TH-SS sequence length, $N$. The analytical result is plotted using Proposition~\ref{proposition_1}.
We see that the analytical result is an upper bound of the simulation result, and the gap diminishes as $N$ increases, e.g., the gap is less than $10^{-4}$ when $N=8000$. Thus, although the inter-link interference may not be fatal, i.e., inducing a BER of $0.5$ at the reader, the analytical result is a tight upper bound especially when $N$ is large.

In Fig.~\ref{fig:Ptag}, curves of the BER for the forward IT, $\Ptag$, are plotted for different $\rho$ and $N$. The analytical result is plotted using Proposition~\ref{Tag_BER_fading}.
We see that the analytical results perfectly match the simulation results, which verifies that the noise effect of the forward BER is negligible under the practical settings.

In Fig.~\ref{fig:Pout}, curves of the energy-outage probability for the forward ET, $\Pout$, are plotted for different $\rho$ and $N$. The analytical result is plotted using Proposition~\ref{fading_outage}, which perfectly matches the simulation result. Hence, the accuracy of the analytical result is verified.

From Figs.~\ref{fig:Preader}-\ref{fig:Pout}, we see that $\Preader$ decreases while both $\Ptag$ and $\Pout$ increase with increasing $\rho$.
This is because a larger $\rho$ induces a stronger backscattered signal and a weaker received signal at the tag, which enhances the backward SNR but reduces the performance of the forward IT and ET.
Also we see that both $\Preader$ and $\Ptag$ decrease while $\Pout$ increases with increasing $N$.
This is because a larger sequence length $N$ suppresses the interference for both the forward and backward IT by reducing the pattern-overlapping probability.
However, it makes the readers have a shorter time (i.e., each chip has a shorter time when $N$ is larger) for active transmissions, which reduces the performance of the forward ET.

Therefore, for practical BackCom system design, the tradeoff between the backward IT and the forward IT/ET with reflection coefficient, and the tradeoff between the IT and the ET with TH-SS sequence length should be carefully considered. Moreover, the reflection coefficient and the sequence length should be optimized to satisfy the performance requirement of a certain BackCom system.

\begin{figure*}[t]
	\renewcommand{\captionlabeldelim}{ }	
	\renewcommand{\captionfont}{\small} \renewcommand{\captionlabelfont}{\small}
	\minipage{0.45\textwidth}
	\centering
		\vspace*{-0.8cm}	
	\includegraphics[scale=0.46]{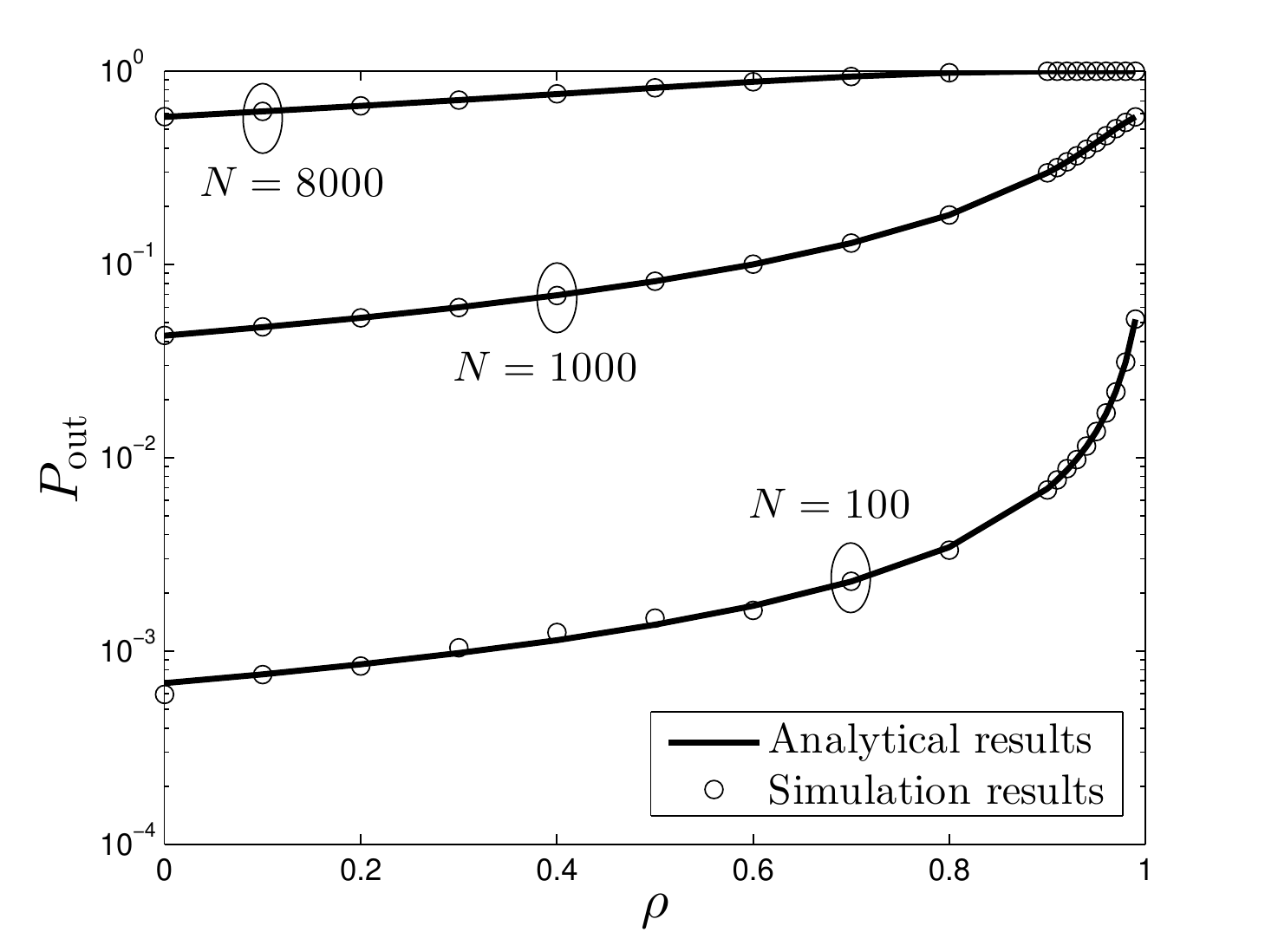}	
	\vspace*{-0.6cm}
	\caption{\small $\Pout$ versus $\rho$.}\label{fig:Pout}
	\endminipage
	\hspace{1cm}
	\minipage{0.45\textwidth}	
	\centering
		\vspace*{-0.8cm}
	\includegraphics[scale=0.46]{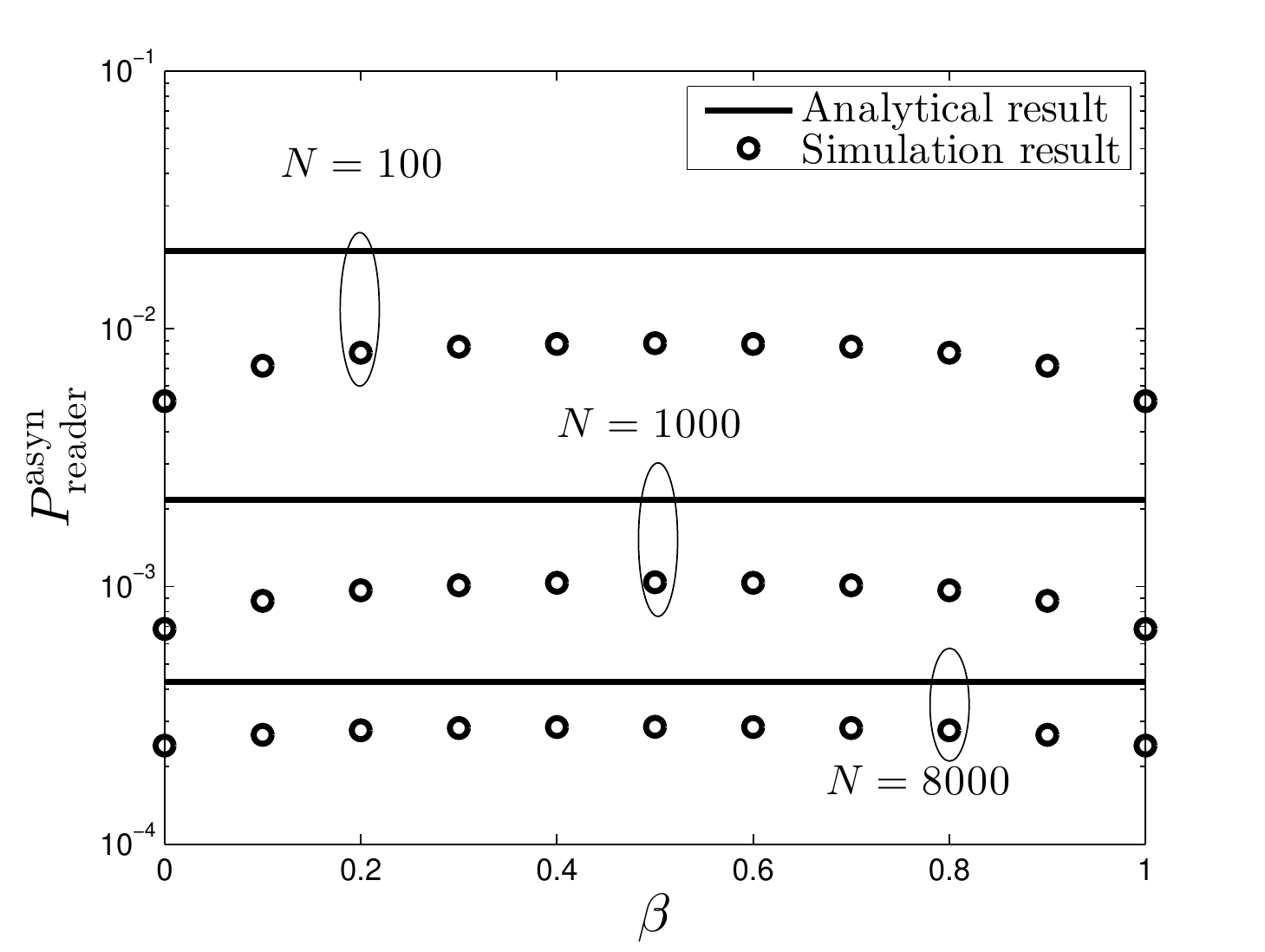}	
	\vspace*{-0.6cm}
	\caption{\small $\Preaderasyn$ versus $\beta$.}\label{fig:Asyn_Reader}
	\endminipage
		\vspace*{-0.7cm}
\end{figure*}

\subsection{Asynchronous BackCom}
%
In Fig.~\ref{fig:Asyn_Reader}, curves of the BER for the backward IT, $\Preaderasyn$, are plotted for different delay offset, $\beta$, and sequence length, $N$. The analytical result is plotted using \eqref{Asyn_Reader_BER}.
We see that the analytical result is an upper bound of the simulation result, and the gap diminishes quickly as $N$ increases, e.g., the gap is about $10^{-3}$ when $N=1000$, and is about $10^{-4}$ when $N=8000$. 
From the simulation result, we see that the BER for the backward IT is mostly affected when the delay offset caused by chip asynchronization is equal to a half chip duration. The influence on the BER caused by asynchronous transmissions is negligible when $N$ is sufficiently large, i.e., the BER is almost the same with $\beta = 0$ and $0.5$, when $N=8000$.
Therefore, although the analytical result is based on the assumption that the BER is the same no matter what the delay offset is, the result is a tight upper bound especially when $N$ is large.

In Fig.~\ref{fig:Asyn_Tag}, curves of the BER for the forward IT, $\Ptagasyn$, are plotted for different $\beta$ and $N$.
The analytical result is plotted using \eqref{asyn_tag_ber}, which matches the simulation result. Hence, the approximation in \eqref{asyn_tag_ber} is tight.
It is observed that the chip asynchronization always increases the BER, i.e., the BER is larger for any $\beta\in(0,1)$ compared with $\beta = 0$ or $1$.
Furthermore, similar with the backward transmission, when the delay offset between the two links is equal to a half chip duration, i.e., $\beta  = 0.5$, the BER is the worst. 
Also it is clear that the chip asynchronization effect on BER can be eliminated by increasing sequence length, for example, the worst chip-asynchronous BER with $N = 8000$ is much smaller than the BER of the chip-synchronous case with $N=1000$.

Therefore, Figs.~\ref{fig:Asyn_Reader} and~\ref{fig:Asyn_Tag} jointly show that the chip-asynchronous effect on the forward and backward IT is negligible as long as the sequence length is sufficiently large.

\begin{figure*}[t]
	\renewcommand{\captionlabeldelim}{ }	
	\renewcommand{\captionfont}{\small} \renewcommand{\captionlabelfont}{\small}
	\minipage{0.45\textwidth}
	\centering
		\vspace*{-0.8cm}	
	\includegraphics[scale=0.46]{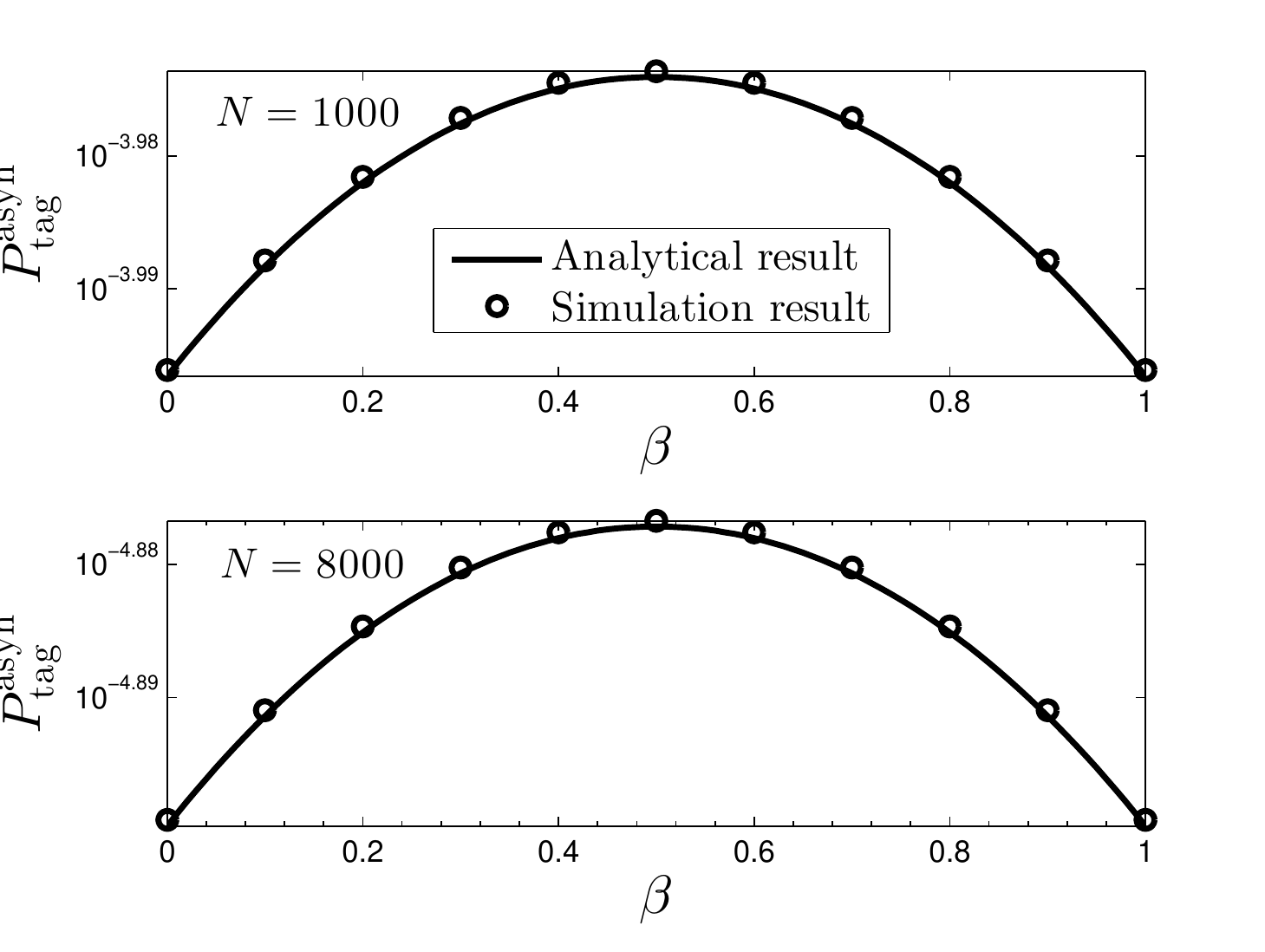}	
	\vspace*{-0.6cm}
	\caption{\small $\Ptagasyn$ versus $\beta$.}\label{fig:Asyn_Tag}
	\endminipage
	\hspace{1cm}
	\minipage{0.45\textwidth}	
	\centering
		\vspace*{-0.8cm}
	\includegraphics[scale=0.46]{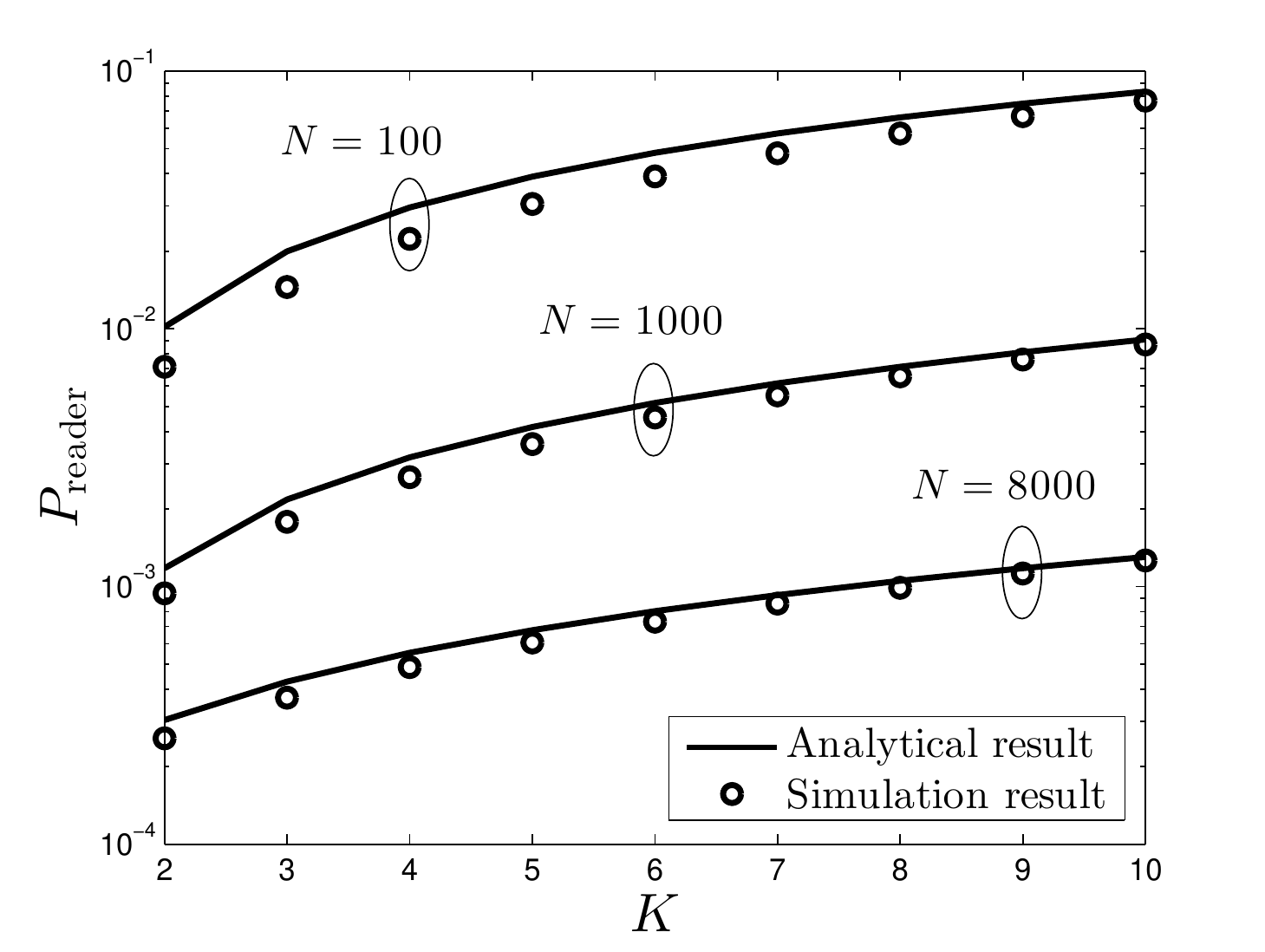}	
	\vspace*{-0.6cm}
	\caption{\small $\Preader$ versus $K$.}\label{fig:K_Reader}
	\endminipage
		\vspace*{-0.7cm}
\end{figure*}

\subsection{$K$-Link BackCom}
%
In Fig.~\ref{fig:K_Reader}, curves of $\Preader$ are plotted for different number of BackCom links, $K$, and different sequence length, $N$. The analytical result is plotted using \eqref{K_BER_reader}, which is an upper bound of the accurate result since it assumes a BER of $0.5$ when the interference occurs at the reader.
We see that the analytical result is a tight upper bound of the simulation result, and the gap diminishes with the increasing sequence length and number of BackCom links. 
Also we see that $\Preader$ increases with $K$, which is mainly because more BackCom links make the backward IT more likely to suffer from interference.
For a fixed $N$, the BER deteriorates as the number of BackCom links becomes large, and can even become close to $0.5$, when the interference occurs.

In Fig.~\ref{fig:K_Tag}, curves of $\Ptag$ are plotted for different $K$ and $N$. The analytical result is plotted using \eqref{K_Tag}, which is a lower bound since it only takes into account the dominant term for a large $N$.
We see that the analytical result is a tight lower bound of the simulation result.
Also we see that $\Ptag$ increases with $K$, since more BackCom links make the backward IT more likely to suffer from interference, and the interference is stronger when it occurs.

In Fig.~\ref{fig:K_Pout}, curves of the energy-outage probability, $\Pout$, are plotted for different $K$ and $N$. The analytical result is plotted using \eqref{K_P_out}, which is a lower bound since it only takes into account the dominant term for a large $N$.
We see that the analytical result is a tight lower bound of the simulation result. 
Also we see that $\Pout$ decreases with $K$, since a large number of BackCom links increases the harvested energy at the tag due to the multi-reader transmissions.

Figs.~\ref{fig:K_Reader}-\ref{fig:K_Pout} jointly show the performance tradeoff between the IT and the ET with the number of BackCom links. 
Thus, for practical BackCom network design, this tradeoff should be carefully considered, and the number of BackCom links should be optimized to satisfy both the performance requirement of IT and ET.
}

\begin{figure*}[t]
	\renewcommand{\captionlabeldelim}{ }	
	\renewcommand{\captionfont}{\small} \renewcommand{\captionlabelfont}{\small}
	\minipage{0.45\textwidth}
	\centering
		\vspace*{-0.8cm}	
	\includegraphics[scale=0.46]{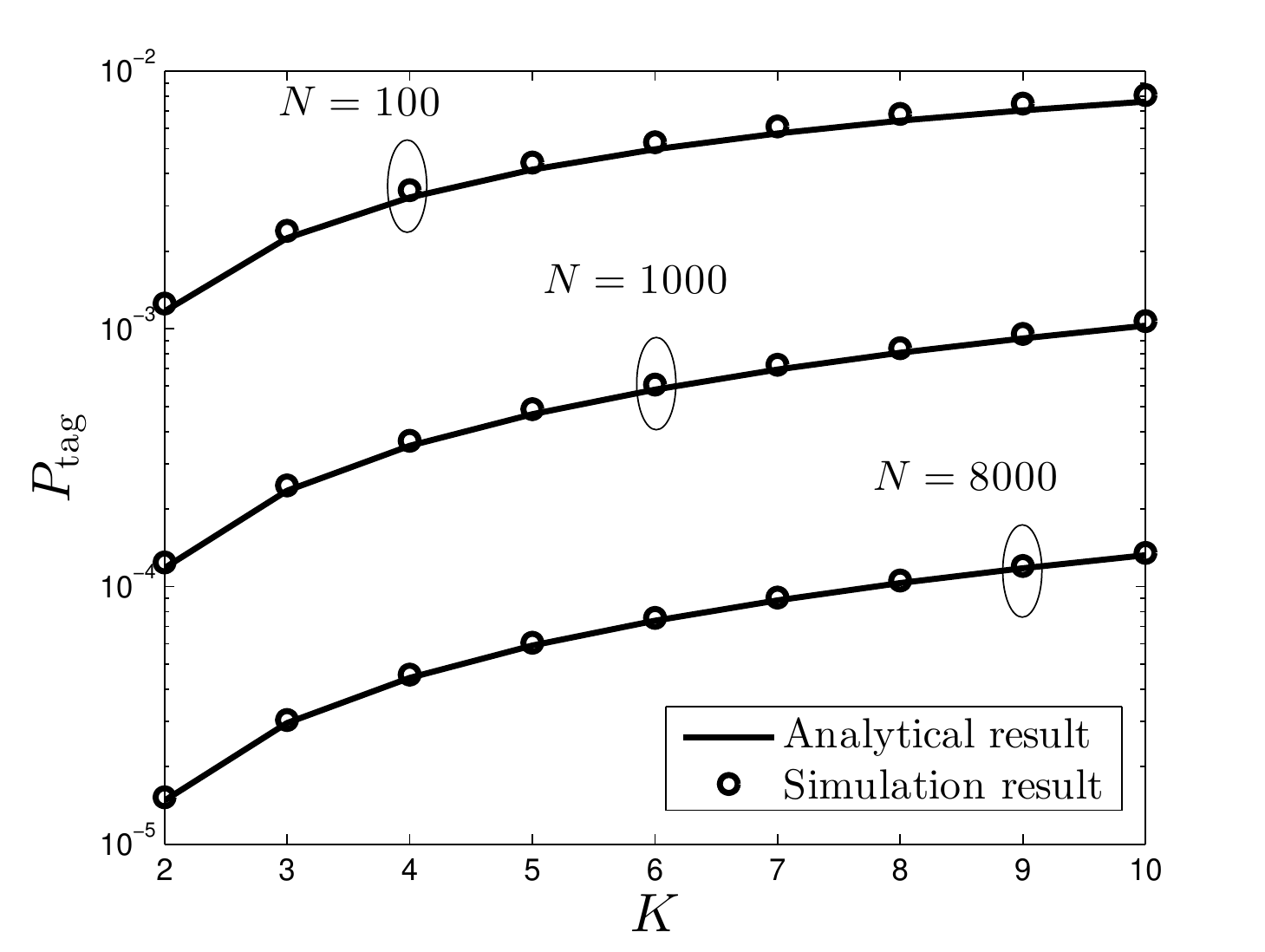}	
	\vspace*{-0.6cm}
	\caption{\small $\Ptag$ versus $K$.}\label{fig:K_Tag}
	\endminipage
	\hspace{1cm}
	\minipage{0.45\textwidth}	
	\centering
		\vspace*{-0.8cm}
	\includegraphics[scale=0.46]{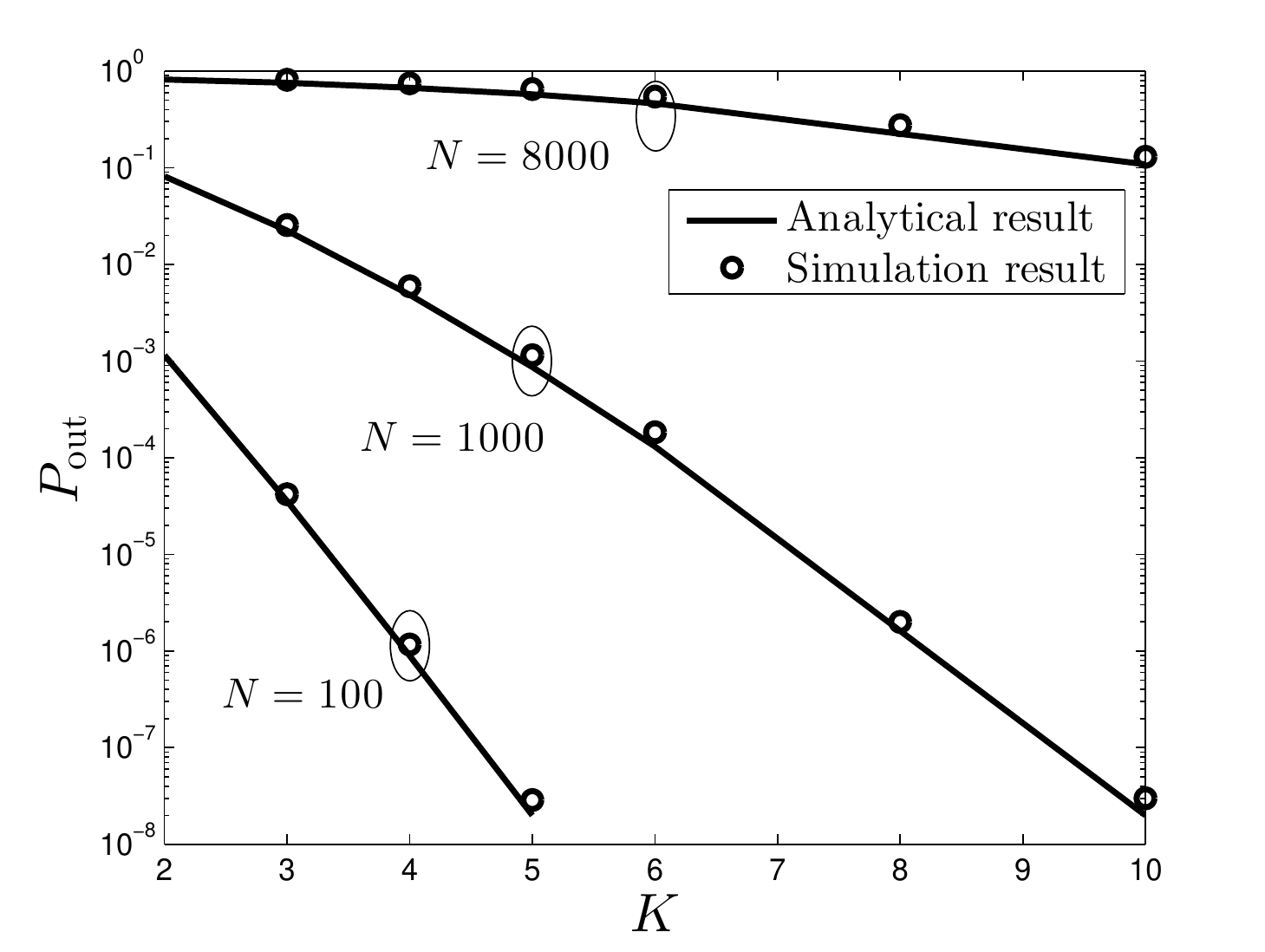}	
	\vspace*{-0.6cm}
	\caption{\small $\Pout$ versus $K$.}\label{fig:K_Pout}
	\endminipage
	\vspace*{-0.7cm}
\end{figure*}

\section{Conclusions}
In this paper, we have proposed a full-duplex BackCom network, where a novel TH-SS based multiple-access scheme is designed.
The scheme enables simultaneous forward/backward IT and can also suppress interference from coexisting links.
Moreover, the scheme not only supports dedicated ET for every symbol but also allows wireless energy harvesting from interference. 
Several interesting design insights are obtained, such as: 
the performance tradeoff between the backward IT and the forward IT/ET w.r.t. the reflection coefficients,
the performance tradeoff between the forward IT and ET w.r.t. the TH-SS sequence length for power constrained reader,
and also the performance tradeoff between the forward/backward IT and the forward ET w.r.t. the number of BackCom links.
{Although the proposed BackCom system enables simultaneously forward/backward IT and forward ET and suppresses inter-link interference, compared with the conventional BackCom system (i.e., the half-duplex system without spreading spectrum), it occupies a larger spectrum bandwidth due to the TH-SS scheme. We will investigate the spectrum efficiency of the proposed BackCom network for future study.}



\setcounter{equation}{0}
\renewcommand{\theequation}{A.\arabic{equation}}
{\small
\section*{Appendix A: Proof of Proposition~\ref{Tag_BER_fading}}

Based on \eqref{fading_tag_outage}, in order to calculate the BER for the forward IT, we need to calculate $\myprobability{\Prxc> \Prxa}$ and $\myprobability{\Prxc> \Prxd}$ as follows:
{\small 
\begin{equation} \label{P_tag_1_2}
\begin{aligned}
\myprobability{\Prxc> \Prxa}
&\stackrel{(a)}{=} 1- \myexpect{g_{21},q_2}{\int_{0}^{\infty}\exp\left(-\mu_{1} x\right) \mu_2 \exp\left(-\mu_{2} x\right) \mathrm{d}x}
= 1- \myexpect{g_{21},q_2}{\frac{\mu_2}{\mu_1+\mu_2}}\\
&= 1- \frac{1}{\rho} \left(\frac{d_{22}d_t}{d_{11}}\right)^\lambda
\exp\left(
\frac{d^\lambda_t}{\rho}
\left(
\left(\frac{d_{22}}{d_{21}}\right)^\lambda+
\left(\frac{d_{22}}{d_{11}}\right)^\lambda
\right)
\right)
\Gamma\left(
0,\ 
\frac{d^\lambda_t}{\rho}\left(\frac{d_{22}}{d_{21}}\right)^\lambda+
\left(\frac{d_{22}}{d_{11}}\right)^\lambda
\right),
\end{aligned}
\end{equation}
}
where $(a)$ is because given $g_{21}$ and $q_2$, both ${\vert f_{11} \vert^2}$
and $\left\vert {f_{21}} + \sqrt{\rho} {f_{22}} {g_{21}} q_2 \right\vert^2$ follow exponential distribution with parameters $\mu_1=1 \left/ \left(\frac{1}{d^\lambda_{11}}\right)\right.$ and $\mu_2=1\left/\left(\frac{1}{d^\lambda_{21}} + \frac{\rho \vert g_{21} \vert^2}{\left(d_{22} d_{t}\right)^{\lambda}}\right) \right.$, respectively.

Similarly, we have
\vspace{-0.5cm}
{\small
\begin{equation} \label{P_tag_2_2}
\begin{aligned}
&\myprobability{\Prxc> \Prxd}
&= 1\!-\!
\left(
\frac{d^\lambda_{22}}{d^\lambda_{12}\!+\!d^\lambda_{22}}
\!+\!\frac{d^\lambda_{t}}{\rho}  \frac{\frac{1}{d^\lambda_{11}d^\lambda_{22}}\!-\!\frac{1}{d^\lambda_{21}d^\lambda_{12}}}{\left(\frac{1}{d^\lambda_{12}}\!+\!\frac{1}{d^\lambda_{22}}\right)^2}
\exp\left(
\frac{d^\lambda_{t}}{\rho} \frac{\frac{1}{d^\lambda_{11}}\!+\!\frac{1}{d^\lambda_{21}}}{\frac{1}{d^\lambda_{12}}\!+\!\frac{1}{d^\lambda_{22}}}
\right)
\Gamma\left(
0,\ 
\frac{d^\lambda_{t}}{\rho} \frac{\frac{1}{d^\lambda_{11}}\!+\!\frac{1}{d^\lambda_{21}}}{\frac{1}{d^\lambda_{12}}\!+\!\frac{1}{d^\lambda_{22}}}
\right)
\right).
\end{aligned}
\end{equation}
}
Taking \eqref{P_tag_1_2} and \eqref{P_tag_2_2} into \eqref{fading_tag_outage}, the BER for the forward IT is obtained.

\setcounter{equation}{0}
\renewcommand{\theequation}{B.\arabic{equation}}
\section*{Appendix B: Proof of Proposition~\ref{fading_average_energy}}
Using the fact that 
$
\myexpect{}{\vert X + Y Z \vert^2} = \myexpect{}{\vert X \vert^2} +\myexpect{}{\vert Y \vert^2}\myexpect{}{\vert Z \vert^2}
$
if $X$, $Y$ and $Z$ are independent random variables and have zero mean, we thus have
$\myexpect{f_{11},q_2}{\Prxa} \!=\! \etabs \myP \frac{1}{d^\lambda_{11}}$,
$\myexpect{f_{11},f_{12},f_{21},f_{22},g_{21},q_2}{\Prxb} \!=\! \etabs \myP \left(\frac{1}{d^\lambda_{11}} \!+\!\frac{1}{d^\lambda_{21}}  \!+ \!\rho \left(\frac{1}{d^\lambda_{12}}\frac{1}{d^\lambda_{t}} \!+\! \frac{1}{d^\lambda_{22}} \frac{1}{d^\lambda_{t}}\right)  \right)$
$\myexpect{f_{21},f_{22},g_{21},q_2}{\Prxc} \!=\! \etabs \myP \left(\frac{1}{d^\lambda_{21}} \!+\! \rho \frac{1}{d^\lambda_{22}} \frac{1}{d^\lambda_{t}}\right),$
$\myexpect{f_{11},f_{12},g_{21},q_2}{\Prxd} \!=\! \etabs \myP \left(\frac{1}{d^\lambda_{11}} \!+\! \rho \frac{1}{d^\lambda_{12}} \frac{1}{d^\lambda_{t}}\right)$
$\myexpect{f_{21},f_{22},g_{21},q_2}{\Peh } = \Prxc/(1-\rho)$
Taking these results into \eqref{first_ave_energy}, the expected ETR is obtained.

\setcounter{equation}{0}
\renewcommand{\theequation}{C.\arabic{equation}}
\section*{Appendix C: Proof of Proposition~\ref{fading_outage}}
Given $g_{21}$ and $q_2$, then
$\Prxa/(\eta \myP)$ and $\Peh/(\eta \myP)$ are independent with each other and follow exponential distribution with mean parameters
$\left(1-\rho \right)\frac{1}{d^\lambda_{11}}$ and $ \left(\frac{1}{d^\lambda_{21}}+ \rho \frac{1}{d^\lambda_{22}} \frac{1}{d^\lambda_{t}} \vert g_{21} \vert^2\right)$, respectively.
Thus, we have
{\small
\begin{equation} \label{outage_prob_1}
\begin{aligned}
&\myprobability{ \frac{T}{N}  \left(\Prxa + \Peh\right) < \mathcal{E}_0 }
= \myexpect{g_{21},q_2}{
	\myprobability{ \left. \frac{\left(\Prxa + \Peh\right)}{\eta \myP} < \frac{N \mathcal{E}_0}{\eta \myP T} \right\vert g_{21},q_2}	
}\\
&= 1-  \mathbb{E}_{g_{21}}\left[
	\frac{\left(1-\rho \right)\frac{1}{d^\lambda_{11}} \exp\left(-\frac{\Xi}{\left(1-\rho \right)\frac{1}{d^\lambda_{11}}}\right)-
		\left(\frac{1}{d^\lambda_{21}}+ \rho \frac{1}{d^\lambda_{22}} \frac{1}{d^\lambda_{t}} \vert g_{21} \vert^2\right)
		\exp\left(-\frac{\Xi}{ \left(\frac{1}{d^\lambda_{21}}+ \rho \frac{1}{d^\lambda_{22}} \frac{1}{d^\lambda_{t}} \vert g_{21} \vert^2\right)}\right)
	}
	{\left(1-\rho \right)\frac{1}{d^\lambda_{11}}-  \left(\frac{1}{d^\lambda_{21}}+ \rho \frac{1}{d^\lambda_{22}} \frac{1}{d^\lambda_{t}} \vert g_{21} \vert^2\right)}
\right]\\
&=M\left(
\left(1-\rho \right)\frac{1}{d^\lambda_{11}},
0,
\frac{1}{d^\lambda_{21}},
\rho \frac{1}{d^\lambda_{22}} \frac{1}{d^\lambda_{t}} 
\right).
\end{aligned}
\end{equation}
}
where 
$
\Xi\triangleq \frac{N \mathcal{E}_0}{\eta \myP T}
$, and function $M(\cdot,\cdot,\cdot,\cdot)$ is defined in \eqref{outage_prob_2}.
Similarly, we can obtain $\myprobability{ \frac{T}{N}  \left(\Prxa + \Prxc\right) < \mathcal{E}_0 }$,
$\myprobability{ \frac{T}{N}  \left(\Prxd + \Peh\right) < \mathcal{E}_0 }$,
$\myprobability{ \frac{T}{N}  \left(\Prxd + \Prxc\right) < \mathcal{E}_0 }$,
$\myprobability{ \frac{T}{N} \Prxb \! < \!\mathcal{E}_0 }$.
Taking these results and \eqref{overlap_prob}, \eqref{outage_prob_1}, and \eqref{outage_prob_2} into \eqref{first_Pout}, the energy-outage probability is obtained.
}


\ifCLASSOPTIONcaptionsoff
\fi
\bibliographystyle{IEEEtran}

\end{document}